\documentclass[11pt]{article}

\usepackage{graphicx,amssymb,amsmath,amsthm}
\usepackage{url,float}
\usepackage{color}
\usepackage{xcolor}

\usepackage{fullpage}

\usepackage{hyperref}
\hypersetup{
colorlinks,
allcolors=black
}

\newcommand{\hide}[1]{}

\newtheorem{observation}{Observation}
\newtheorem{claim}{Claim}

\newtheorem{theorem}{{\bf Theorem}}

\newtheorem{lemma}[theorem]{Lemma}

\newtheorem{definition}[theorem]{Definition}

\newcommand{\lemlab}[1]{\label{lemma:#1}}
\newcommand{\thmlab}[1]{\label{thm:#1}}

\newcommand{\figlab}[1]{\label{fig:#1}}
\newcommand{\seclab}[1]{\label{sec:#1}}

\newcommand{\lemref}[1]{\ref{lemma:#1}}
\newcommand{\thmref}[1]{\ref{thm:#1}}

\newcommand{\secref}[1]{\ref{sec:#1}}
\newcommand{\figref}[1]{\ref{fig:#1}}





\newcommand{\Anna}[1]{{\color{red} Anna: [#1]}}
\newcommand{\anna}[1]{{\color{blue} #1}}

\def\defn#1{\textit{\textbf{\boldmath #1}}}

\usepackage{enumitem}
\newcommand{\squeezelist}{\setlength{\itemsep}{0pt}}

\usepackage{hyperref}
\hypersetup{
    colorlinks=true,
    linkcolor=blue,
    filecolor=magenta,      
    urlcolor=cyan,
    }
\usepackage{appendix}

\begin{document}

\title{Super Guarding and Dark Rays in Art Galleries}

\author{%
MIT CompGeom Group\thanks{%
Artificial first author to highlight that the other authors
(in alphabetical order) worked as an equal group.
Please include all authors (including this one) in your bibliography,
and refer to the authors as ``MIT CompGeom Group'' (without ``et al.'').}
\and
Hugo A. Akitaya\thanks{%
U. Mass. Lowell, \texttt{hugo\_akitaya@uml.edu}}
\and
Erik D. Demaine\thanks{%
MIT, \texttt{edemaine@mit.edu}}
\and
Adam Hesterberg\thanks{%
Harvard U., \texttt{achesterberg@gmail.com}}
\and
Anna Lubiw\thanks{%
U. Waterloo, \texttt{alubiw@uwaterloo.ca}}
\and
Jayson Lynch\thanks{%
MIT, \texttt{jaysonl@mit.edu}}
\and
Joseph O'Rourke\thanks{%
Smith College, \texttt{jorourke@smith.edu}}
\and
Frederick Stock\thanks{%
U. Mass. Lowell, \texttt{fbs9594@rit.edu}} 
}



\maketitle

\begin{abstract}
We explore an Art Gallery variant where each point of a polygon
must be seen by $k$ guards, and guards cannot see through other guards.
Surprisingly, even covering convex polygons under this variant is not
straightforward. For example, covering every point in a triangle
$k\hspace{1pt}{=}\hspace{1pt}4$ times (a \defn{$4$-cover})
requires $5$ guards, and achieving
a $10$-cover requires $12$ guards.
Our main result is tight bounds on $k$-covering a convex polygon of $n$ vertices,
for all $k$ and $n$.
The proofs of both upper and lower bounds are nontrivial.
We also obtain bounds for simple polygons, 
leaving tight bounds an open problem.
\end{abstract}

\section{Introduction}
\seclab{Introduction1}
The original Art Gallery Theorem showed
that $\lfloor n/3 \rfloor$ guards are sometimes
necessary and always sufficient to guard 
a simple polygon $P$ of $n$ vertices~\cite{o-agta-87}. 
(Throughout, $P$ includes its boundary $\partial P$,
and guarding $P$ includes guarding $\partial P$.)
There have been many interesting variants
explored since then.
In this paper we explore two variants that are interesting in combination, although not individually.
\begin{enumerate}[label={\rm(\arabic*)}]
\item \emph{Guards blocking guards}:
Suppose guards cannot see through other guards.\footnote{
This was posed as an exercise in
\cite{do-dcg-11}, Exercise~1.28, p.~14.} 
More precisely, if $g_1$ and $g_2$ are guards, and $g_1, g_2, p$ are on a line in that order, 
then point $p$ is not visible from $g_1$. 
Still the original bound $\lfloor n/3 \rfloor$ holds, because $g_2$
can continue $g_1$'s line-of-sight to $p$, picking it up where that line-of-sight terminates at $g_2$.
\item \emph{Multiple coverage}:
Suppose every point in the 
closed polygon must be seen by $k$ guards
i.e., the guards must \defn{$k$-cover} the polygon. 
The problem of $k$-guarding
has been explored under various restrictions on guard location~\cite{belleville1994k,salleh2009k,busto2013k,durocher2}.
If multiple guards can be co-located at the same point, then this is trivial.
If co-location is disallowed, but guards can
see through other guards, then this still reduces to the case $k=1$ 
since we can replace a single guard by a cluster of $k$ guards.
(We detail the argument in Section~\secref{SimplePolygon}.)
\end{enumerate}

So neither of these variations is ``interesting" by itself in the sense that easy arguments lead
to $\lfloor n/3 \rfloor$ bounds.
However, consider now mixing these two variants:
\begin{quote}
\emph{Q}:
How many guards are necessary and sufficient to 
cover a simple polygon $P$ of $n$ vertices
so that every point of $P$ is seen by at least $k$ guards,
where guards cannot be co-located, 
and each guard blocks lines-of-sight 
beyond it.
\end{quote}
To our surprise, answering \emph{Q} is not straightforward, even for convex polygons,
even for triangles.
For example, to cover a triangle to depth $k=3$, one guard at each vertex suffices.
Note 
we consider a guard to see itself.
But to cover to depth $k=4$ requires $g=5$ guards;
see Fig.~\figref{4Cover5Guards}.
And covering to depth $k=10$ requires $g=12$ guards.\footnote{
Note that if we only cared to $k$-cover
points strictly interior to $P$, it would suffice
to place $k$ guards on $\partial P$.}

\begin{figure}[htbp]
\centering
\includegraphics[width=0.8\textwidth]{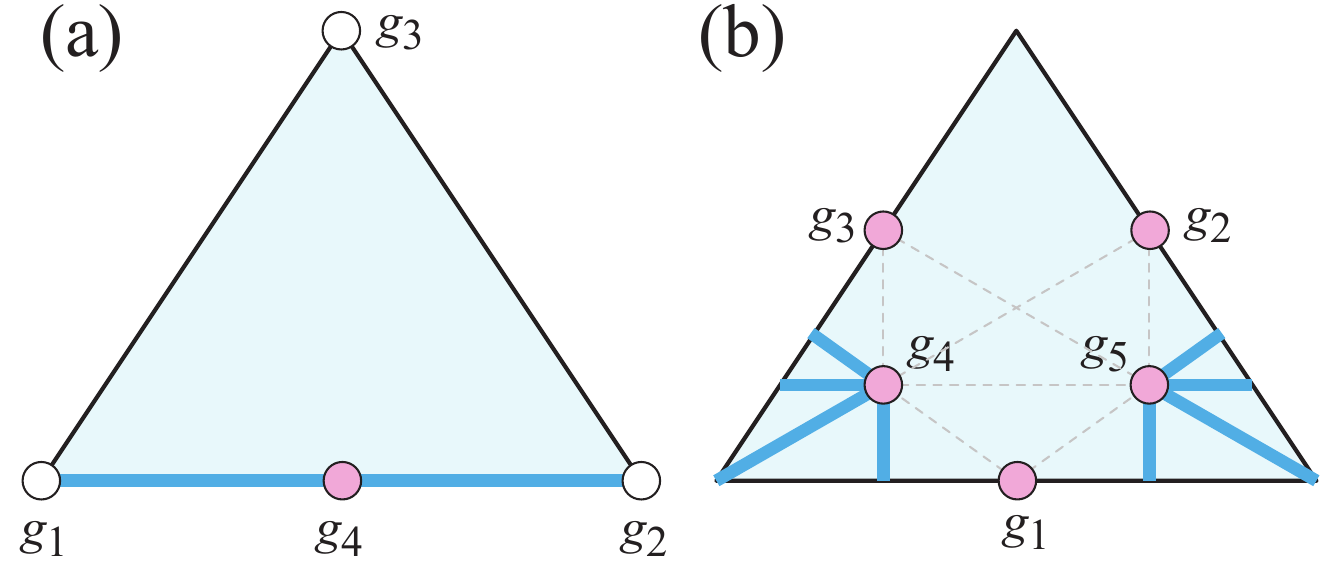}
\caption{Five guards needed to $4$-cover.
(a)~All strictly interior points are $4$-covered, but the blue segments
to either side of $g_4$ are only $3$-covered.
(b)~Points on the dark rays (blue segments)
incident to $g_4$ and $g_5$ are $4$-covered; all
other points are $5$-covered.}
\figlab{4Cover5Guards}
\end{figure}

The main result of this paper is the following theorem,
which holds for any shape of convex polygon.
We use $n$ for the number of vertices, $k$ for the depth of cover, and $g$ for the number of guards.

\begin{theorem}
\thmlab{ConvexPoly}
\label{thm:main}
For a closed convex $n$-gon,
coverage to depth $k$ requires $g \in \{k,k+1,k+2\}$ guards:
\begin{enumerate}[label={\rm(\arabic*)}]
\squeezelist
\item For $k \le n$: $\;g=k$ guards are necessary and sufficient.
\item For $n < k < 4n-2$: $\;g=k+1$ guards are necessary and sufficient.
\item For $4n-2 \le k$: $\;g=k+2$ guards are necessary and sufficient.
\end{enumerate}
\end{theorem}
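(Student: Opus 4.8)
The plan is to translate the problem into the language of \emph{dark rays}. A guard $g$ cannot see a point $p\in P$ exactly when some other guard lies strictly on the open segment $gp$; consequently the set of points hidden from $g$ is a finite union of segments---one dark ray for each guard that is ever able to block $g$. I would first record who blocks whom: a guard placed \emph{at a vertex} of $P$ blocks no one, because the ray from any point of $P$ through a vertex exits $P$ at that vertex; a guard in the relative interior of an edge $e$ blocks only guards that also lie on $e$, and only along the line of $e$; and a guard in the interior of $P$ blocks \emph{every} other guard along a genuine (nonempty) dark ray. I would then restate coverage: $g$ guards form a $k$-cover of $P$ iff every point of $P$ is hidden from at most $g-k$ guards. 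A useful consequence is a collinearity bound: if a line carries $t$ guards (all inside the chord it cuts from $P$), the point between the two middle ones is hidden from $t-2$ of them, so a $k$-cover with $g$ guards has at most $g-k+2$ guards on any line. The three regimes of the theorem then read: $g=k$ means no dark rays at all; $g=k+1$ means the hidden sets of the guards are pairwise disjoint; $g=k+2$ means no point lies in three hidden sets.

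Next, the routine parts. One needs $g\ge k$ because any $k$-covered point already exhibits $k$ distinct guards. For item~(1) with $k\le n$, put $k$ guards at distinct vertices of the (strictly convex) $n$-gon; by the vertex fact these cast no dark rays, so each guard sees all of $P$ and we have a $k$-cover. For the lower bounds: if $g=k$, the no-dark-rays condition together with the block-who list forces every guard to a vertex or to be the sole guard on its edge, and an edge carrying a guard cannot have a guard at either endpoint; a one-line count on the boundary cycle then gives $g\le n$, so $k>n\Rightarrow g\ge k+1$. If moreover $g=k+1$ with $k\ge 4n-2$, the hidden sets are pairwise disjoint, so the counting lemma below forces $g\le 4n-2<k+1$, a contradiction; hence $k\ge 4n-2\Rightarrow g\ge k+2$. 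Thus all ``necessary'' directions reduce to one lemma.

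For the ``sufficient'' directions I would build up from the $n$ vertex guards (a dark-ray-free $n$-cover) and add the remaining guards on tiny convex arcs hugging the boundary of $P$ just inside it. Because such a guard sits next to an edge of a convex polygon, every dark ray it casts---on a vertex guard, on an edge guard, or on an arc-mate---is a short segment exiting $P$ near that edge, so the contributions of the different arcs do not interact; and within one arc, convexity makes the outward extensions of its chords fan out, so with the arc shaped and sized correctly the resulting dark rays can be made pairwise disjoint. Carefully counting how many arc guards the neighborhood of each edge/vertex can absorb while keeping all hidden sets disjoint yields configurations of up to $4n-2$ guards, hence a $(g-1)$-cover for every such $g$ (smaller sizes follow by deleting guards): exactly the range of item~(2). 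For item~(3) one relaxes ``pairwise disjoint'' to ``no triple overlap,'' which---now allowing at most four guards per line and double overlaps of dark rays---lets a neighborhood absorb an arbitrarily large arc, producing a $(g-2)$-cover with $g=k+2$ for every $k\ge 4n-2$; verifying that the overlaps never reach multiplicity three is the content there.

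The heart of the proof---and where I expect the real work---is the combinatorial lemma: \emph{in a convex $n$-gon, at most $4n-2$ guards can have pairwise-disjoint hidden sets.} Since every hidden set is one-dimensional, no area argument can see this; the bound must come from the incidence pattern of the dark rays against the $n$ edges and $n$ corners. My plan is a discharging argument on $\partial P$: by the block-who list the purely edge-local interactions already cap the edge-interior guards at two per edge (and forbid guarded endpoints when there are two), which accounts for a ``$2n$''; then, for each interior guard, pairwise disjointness forbids any two of the $g-1$ dark rays it emits (distinct victims, hence they must not meet) from sharing a boundary endpoint or crossing, so tracing these rays to $\partial P$ and summing the turning they force around the boundary---whose total budget is fixed by the $n$ convex corners---should limit the interior guards to $2n-2$ more, for the claimed $4n-2$. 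Pinning down this accounting, and checking it is tight against the arc construction above, is the obstacle; once it is in hand, all three cases of the theorem follow.
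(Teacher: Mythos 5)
Your framework (dark rays; $g=k$ iff no dark point, $g=k{+}1$ iff no $2$-dark point, $g=k{+}2$ iff no $3$-dark point; the $\le n$ bound for dark-ray-free placements; reduction of all three regimes to a single extremal lemma ``at most $4n-2$ guards with no $2$-dark point, and this is attainable'') matches the paper exactly, and the easy directions you do spell out are fine. But both halves of that extremal lemma --- which are the actual content of the theorem --- are left as sketches, and the sketch of the upper bound rests on a split that is provably wrong. You propose to show ``at most $2n$ edge-interior guards plus at most $2n-2$ interior guards.'' The second half is false: in the tight configuration (and in the paper's construction achieving $4n-2$) only $n$ guards lie on $\partial P$ (one per edge) while $3n-2$ guards are strictly interior, so no discharging argument can cap the interior guards at $2n-2$. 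The paper's counting is organized differently: it bounds by $2n$ the number $g^*$ of guards that are either on $\partial P$ or corners of the convex hull $C$ of the guards (via a count of ``darkened'' vertices, after shrinking $P$ so every edge carries guards, and with special care for collinear guards), then triangulates $G^*$ into at most $2n-2$ triangles and proves a Triangle Lemma: a triangle with guards at its corners can contain at most one further guard unless a specific boundary-edge exceptional case occurs, which the choice of $G^*$ rules out. None of this structure is present in your turning/discharging outline, and you yourself flag the accounting as the unresolved obstacle.

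The attainability half is likewise not established. Placing the extra guards ``on tiny convex arcs hugging the boundary'' and asserting that their dark rays exit harmlessly does not address the real difficulty: each near-boundary guard receives a dark ray from \emph{every} other guard, and these short segments near a common edge can easily cross one another; nor do you give any accounting for why the method tops out at exactly $4n-2$ rather than some smaller number. The paper's construction is correspondingly delicate: a zigzag (serpentine) triangulation with one ``elbow'' guard per triangle placed on the line $m_i p_i$ through prescribed dividing points, pairwise-disjoint ``safe regions'' $R_i$ near the vertices guaranteeing that all dark rays at elbow guards stay in disjoint wedges, and then three vertex guards $x_i,y_i,z_i$ per vertex located by explicit linear constraints (with $x_i$ on an edge, $x_i,y_i$ on the hull of the guards, $z_i$ inside). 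Your regime-(3) sufficiency also needs a general-position/perturbation argument (three dark rays must not be concurrent, which a convex arc alone does not guarantee), though that part is easy and is the paper's Lemma on avoiding $3$-dark points. In short: correct reduction, but the two load-bearing arguments are missing, and the proposed route to the upper bound would have to be replaced, not merely completed.
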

Thus there are three regimes depending on the relationship between $n$ and $k$.
For triangles, $n=3$, the following table details those regimes:

\begin{table}[h]
\begin{center} 
\begin{tabular}{| c || c | c | c || c | c | c | c | c | c || c | c | c |}
\hline
$k$ & 1 & 2 & 3 & 4 & 5 & 6 & 7 & 8 & 9 & 10 & 11 & $\cdots$\\
\hline
$g$ & 1 & 2 & 3 & 5 & 6 & 7 & 8 & 9 & 10 & 12 & 13 & $\cdots$\\
\hline
\end{tabular}
\end{center}
\end{table}

\noindent
Another example: For $n=4$, $g=14$ guards $13$-cover, but a $14$-cover requires $g=16$ guards.
See ahead to Fig.~\figref{Square_FS}.

Our primary focus is proving Theorem~\thmref{ConvexPoly}.
We also obtain in Lemma~\lemref{Wedge} tight bounds for a convex wedge,
which can be viewed as a $2$-sided 
unbounded convex polygon.
Finally, we briefly address simple polygons in Theorem~\thmref{SimplePolygon},
which we do not consider as natural a fit as is the question for convex polygons.\footnote{Preliminary version: \cite{MIT-DarkRays-2023}.}

\medskip
\subsection{Proving Theorem~\ref{thm:main} using Dark Rays and Dark Points}
\seclab{DarkRays}
With some abuse of notation, we will identify both a guard and that guard's location as $g_i$.
Let $g_1$ and $g_2$ be two guards visible to one another.
We say that $g_2$ \defn{generates} a \defn{dark ray} \defn{at} $g_1$, 
which is
a ray 
contained in the line through $g_1$ and $g_2$,
incident to and rooted but open at $g_1$, and invisible to $g_2$.
And similarly, $g_1$ generates a dark ray at $g_2$.

A point is called 
\defn{dark} if it is contained in a dark ray, and 
\defn{$d$-dark} if 
it is contained in at least $d$ dark rays.

Because 
a $d$-dark point is hidden from $d$ guards,
we obtain an immediate relationship between dark rays 
and multiple guarding for convex polygons.

\begin{observation}
\label{obs:guards-and-rays-1}
$k$-guarding with $g=k$ guards is possible 
if and only if
there is no dark point inside $P$, i.e., all dark rays are strictly exterior to $P$.
\end{observation}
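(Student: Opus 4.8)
The plan is to prove both directions by reducing the event ``guard $g_i$ fails to see point $p$'' directly to the event ``$p$ lies on a dark ray,'' using convexity of $P$ to rule out any obstruction other than a guard. I assume all $g=k$ guards lie in $P$; since $P$ is convex, for any $x,y\in P$ the segment $\overline{xy}$ lies in $P$, so no edge of $P$ can block a line of sight. Hence for $p\in P$ and a guard $g_i$, the guard $g_i$ sees $p$ if and only if no guard lies strictly between $g_i$ and $p$ on $\overline{g_i p}$. This single equivalence is the engine of both implications.

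For ``$k$-covering with $k$ guards is possible $\Rightarrow$ no dark point in $P$'': suppose for contradiction some dark point $p\in P$ exists, so $p$ lies on the dark ray at a guard $g_1$ generated by a mutually visible guard $g_2$. By the definition of a dark ray, $p$ lies on the line through $g_1$ and $g_2$, on the far side of $g_1$ from $g_2$, with $p\ne g_1$; thus $g_1$ is a convex combination of $g_2$ and $p$ with positive coefficients, i.e.\ $g_1$ lies strictly between $g_2$ and $p$ on $\overline{g_2 p}\subseteq P$. So $g_1$ blocks $g_2$'s view of $p$, and at most the $k-1$ guards other than $g_2$ can see $p$, contradicting the $k$-cover.

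For the converse ``no dark point in $P$ $\Rightarrow$ the $k$ guards $k$-cover $P$'': fix $p\in P$; it suffices to show every guard sees $p$. If some $g_i$ does not see $p$, then the open segment $(g_i,p)$ contains a guard; let $g_j$ be the one nearest $g_i$. No guard lies strictly between $g_i$ and $g_j$, so (again by convexity) $g_i$ and $g_j$ are mutually visible and $g_i$ generates a dark ray at $g_j$. Since $g_i,g_j,p$ occur in that order on a line, $p$ lies on the line through $g_i$ and $g_j$, strictly beyond $g_j$ away from $g_i$; hence $p$ lies on this dark ray, so $p\in P$ is a dark point — a contradiction. Therefore every guard sees $p$, and $p$ is $k$-covered.

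Finally I would note that ``there is no dark point inside $P$'' and ``every dark ray is strictly exterior to $P$'' are the same statement by definition of dark point, once one recalls a dark ray is open at its root. I do not expect a genuine obstacle here; the only point requiring care is boundary bookkeeping — a guard and a collinear second guard on the same edge produce a dark ray that runs along $\partial P$ — but since $\partial P\subseteq P$ this case is already handled by the arguments above, so the proof is essentially a direct unwinding of the definitions after convexity eliminates wall occlusions.
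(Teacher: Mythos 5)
Your proof is correct and matches the paper's reasoning: the paper treats this observation as immediate from the remark that a $d$-dark point is hidden from $d$ guards, and your two directions (a dark point is unseen by its generating guard, hence under-covered; conversely a blocked line of sight yields a nearest blocking guard whose dark ray contains $p$) are exactly the routine unwinding of that remark, with convexity correctly invoked to rule out wall occlusion. No gaps; the boundary case you flag is handled as you say since $\partial P\subseteq P$.
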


\begin{observation}
\label{obs:guards-and-rays-2}
$k$-guarding with $g=k+1$ guards is possible 
if and only if
there is no $2$-dark point inside $P$.
\end{observation}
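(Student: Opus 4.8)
The plan is to mirror the logic behind Observation \ref{obs:guards-and-rays-1}, but one ``level'' up. The key arithmetic fact is that with $g = k+1$ guards, a point $p \in P$ is seen by at least $k$ of them precisely when $p$ is blocked from at most one guard, i.e.\ when $p$ is not $2$-dark. So I would first record the clean local statement: for a convex polygon $P$ and a set $S$ of guards with $|S| = k+1$, a point $p \in P$ is visible from at least $|S| - d$ guards if and only if $p$ lies on at most $d$ dark rays generated by pairs of guards in $S$; this is because, $P$ being convex, the only way a guard $g_i$ fails to see $p$ is that the open segment from $g_i$ to $p$ contains another guard $g_j$, which is exactly the statement that $p$ lies on the dark ray generated by $g_j$ at $g_i$. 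Specializing to $d = 1$ gives: with $k+1$ guards, every point of $P$ is $k$-covered iff no point of $P$ is $2$-dark.

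Then the two directions of the ``iff'' are almost immediate. For the ``if'' direction, suppose some guard placement of $k+1$ guards achieves a $k$-cover; if there were a $2$-dark point $p$ inside $P$, it would be blocked from at least two guards and hence seen by at most $k-1$, contradicting the $k$-cover. For the ``only if'' direction, suppose there is a placement of $k+1$ guards in $P$ with no $2$-dark point inside $P$; then every point of $P$ lies on at most one dark ray, so is blocked from at most one guard, hence seen by at least $k$ guards, which is the desired $k$-cover. One should state at the outset that ``$k$-guarding with $g = k+1$ guards is possible'' is quantified existentially over placements of $k+1$ guards, matching the phrasing in Observation \ref{obs:guards-and-rays-1}, so that the two observations have parallel form.

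The one subtlety I would be careful about --- the only place this is more than bookkeeping --- is the treatment of the boundary $\partial P$ and of collinear guards, since guarding $P$ includes guarding $\partial P$, and a dark ray is open at its root. I would note that a guard $g_i$ sitting on a dark ray generated at $g_i$ by some $g_j$ is irrelevant (a guard sees itself, and the ray is open at $g_i$), and that if three or more guards are collinear the relevant point is still only blocked once per blocking guard between it and the viewer, so ``$2$-dark'' correctly counts ``blocked from $\geq 2$ guards.'' I expect this edge-case discussion, rather than the main argument, to be where the write-up needs the most care; everything else is a direct unwinding of the definition of dark ray together with convexity.
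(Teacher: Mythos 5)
Your argument is correct and matches the paper's own treatment: the observation is stated there without proof, as an immediate consequence of the fact that in a convex polygon a guard is blocked from $p$ exactly when another guard lies on the open segment to $p$, so with $k+1$ guards a point is $k$-covered iff it is not $2$-dark. Two small wording points: the dark ray containing $p$ is the one rooted at the blocker $g_j$ and generated by $g_i$ (you swapped root and generator), and your general-$d$ biconditional can fail for $d\ge 2$ when several guards are collinear with $p$ (a point may lie on more dark rays than the number of guards it is hidden from), though the $d=1$ case you actually use is exactly right.
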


\begin{observation}
\label{obs:guards-and-rays-3}
$k$-guarding with $g=k+2$ guards is possible if and only if there is no $3$-dark point inside $P$. 
Furthermore, this is always possible because we can perturb the guards to avoid $3$-dark points, as proved formally in Lemma~\ref{lemma:avoid-3-dark} below.\end{observation}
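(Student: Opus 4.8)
The plan is to establish the observation's ``if and only if'' the same way one would establish Observations~\ref{obs:guards-and-rays-1} and~\ref{obs:guards-and-rays-2}: by proving a single exact counting identity and then specializing the count to $g=k+2$. The identity I would prove is that \emph{for any placement of the $g$ guards and any point $p\in P$ that is not itself a guard location, the number of dark rays containing $p$ equals the number of guards that fail to see $p$.} Granting this, with $g=k+2$ guards a point $p$ is $k$-covered (seen by at least $k$ guards) exactly when at most $g-k=2$ guards fail to see it, i.e.\ exactly when $p$ lies in at most two dark rays, i.e.\ exactly when $p$ is not $3$-dark. Hence a given placement of $k+2$ guards $k$-covers $P$ iff it creates no $3$-dark point inside $P$; taking the existential closure over placements yields the stated equivalence. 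The finitely many guard-location points (and boundary points) are disposed of separately, e.g.\ by a limiting argument or by the genericity used in Lemma~\ref{lemma:avoid-3-dark}.

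To prove the counting identity I would fix $p$ and group the guards by the ray they determine from $p$. Because $P$ is convex, the open segment $\overline{g_i p}$ lies in $P$, so $g_i$ sees $p$ precisely when no other guard lies strictly on $\overline{g_i p}$. On a single ray from $p$, let $h_1,\dots,h_m$ be the guards in order of increasing distance from $p$; then only the nearest, $h_1$, sees $p$, while $h_2,\dots,h_m$ are each blocked, contributing $m-1$ unseeing guards. I would then count the dark rays through $p$ supported on this same line: a dark ray containing $p$ is generated by a mutually visible pair $(g_1,g_2)$ with $g_1$ strictly between $g_2$ and $p$, which forces $g_1,g_2$ to lie on this ray with $g_1$ nearer $p$; mutual visibility forbids any guard between them, so the pair must be consecutive, $(h_i,h_{i+1})$. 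Conversely each consecutive pair yields exactly one such dark ray---the one $h_{i+1}$ generates at $h_i$, which opens away from $h_{i+1}$, hence toward $p$, and so contains it. This gives exactly $m-1$ dark rays through $p$ on this line, matching the $m-1$ unseeing guards; summing over the distinct ray directions (each guard lies on exactly one) proves the identity.

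The main obstacle, and the step deserving the most care, is the mutual-visibility clause in the definition of a dark ray: it is what makes the correspondence an exact bijection rather than an overcount. A naive reading would attach a dark ray through $p$ to \emph{every} guard blocked along the line, but dark rays are generated only by pairs of guards that see each other, so one must verify that along a line containing $h_1,\dots,h_m$ it is precisely the consecutive pairs that are mutually visible, and precisely those that produce a ray opening toward $p$. The orientation bookkeeping must also be checked---the companion ray that $h_i$ generates at $h_{i+1}$ opens away from $p$ and is irrelevant---and the degenerate configurations (three or more collinear guards, and points $p$ coinciding with a guard or lying on $\partial P$) must be handled, which is exactly where the perturbation of Lemma~\ref{lemma:avoid-3-dark} is convenient. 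Finally, the ``furthermore'' clause---that $g=k+2$ always suffices---is a separate existence claim, not part of the equivalence above, and is deferred to Lemma~\ref{lemma:avoid-3-dark}, which exhibits a placement with no $3$-dark point.
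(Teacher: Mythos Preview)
Your proof is correct and follows the same idea the paper relies on: the paper states the three observations as immediate consequences of the single sentence ``a $d$-dark point is hidden from $d$ guards,'' without further argument, and your counting identity (number of dark rays through $p$ equals number of guards failing to see $p$) is exactly the two-sided version of that claim, carefully justified via the consecutive-pair analysis along each ray. If anything, you are more rigorous than the paper here---the paper's one-line justification only gives the forward implication explicitly, whereas your bijection argument (matching each non-seeing guard $h_{i+1}$ to the dark ray it generates at $h_i$) cleanly establishes both directions of the iff.
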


    
    
    

We begin by proving the claim in Observation~\ref{obs:guards-and-rays-3} about perturbing guards to avoid $3$-dark points. 
For use in later sections
we extend the result to handle a \defn{wedge} or \defn{cone}---a region of the plane
bounded by two rays from a convex vertex $a$.

\begin{lemma}[Guards in General Position]
\label{lemma:avoid-3-dark}
For any $g$ it is possible to place $g$ guards in any convex $n$-gon $P$
or any wedge
without creating a 3-dark point in $P$.  
\end{lemma}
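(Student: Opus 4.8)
The plan is to argue by a dimension/genericity count: the ``bad'' configurations---placements of $g$ guards that create a $3$-dark point---lie in a lower-dimensional subset of the $2g$-dimensional configuration space $(\mathbb{R}^2)^g$ (intersected with $P^g$), so a generic placement avoids them, and moreover we can reach such a placement from any starting placement by an arbitrarily small perturbation. First I would set up the space of placements as an open subset $\mathcal{U}\subseteq P^g$ (or $W^g$ for a wedge $W$) of full dimension $2g$, where $\mathcal{U}$ additionally requires all guards distinct and in ``general position'' in the weak sense that no three are collinear; this is the complement of finitely many hyperplanes, hence open and dense. Then I would characterize when a point $p\in P$ is $3$-dark: $p$ lies on $3$ dark rays, and each dark ray through $p$ is determined by an \emph{ordered} pair of guards $(g_i,g_j)$ with $g_i,g_j,p$ collinear in that order (so $p$ is the dark ray at $g_j$ generated by $g_i$, hidden from $g_i$). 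Thus a $3$-dark point $p$ comes with three ordered pairs, and crucially the three lines $\overline{g_ig_j}$ supporting these rays must all pass through the single point $p$.

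The key combinatorial step is to bound, for each fixed choice of which ordered pairs of guards are involved, the dimension of the set of placements realizing a $3$-dark point of that ``type.'' There are only finitely many types (choices of three ordered pairs from $g$ guards, with the guards involved being among at most $6$ of them---fewer if pairs share a guard). For a fixed type, the condition is that three specific lines, each through two specified guards, are concurrent. Concurrency of three lines is one algebraic equation on the guard coordinates; so the bad set for that type is contained in a hypersurface, of dimension $\le 2g-1$, inside $\mathcal{U}$. I should double-check the degenerate sub-cases: if two of the three dark rays through $p$ are supported by the same line (i.e., the three ordered pairs use only two distinct supporting lines), then $p$ is forced to be the common intersection of two lines, which is already codimension $\ge 1$ once we also demand, say, that the relevant collinearity ordering places $p$ on both rays; and if all three share one line, then we need three guards collinear, excluded in $\mathcal{U}$. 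In every case the realizing set is a proper (closed, lower-dimensional) subvariety. Taking the finite union over all types, the set $\mathcal{B}\subseteq\mathcal{U}$ of placements producing some $3$-dark point \emph{in $P$} is contained in a finite union of hypersurfaces, hence is nowhere dense in $\mathcal{U}$, and in particular has empty interior.

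It remains to pass from ``a generic placement works'' to ``any placement can be perturbed to work.'' Given an arbitrary placement of $g$ guards in $P$ (or $W$), first perturb slightly so that the guards are distinct and no three are collinear---possible since those constraints are finitely many hyperplanes---landing in $\mathcal{U}$; then, since $\mathcal{B}$ has empty interior in $\mathcal{U}$, every neighborhood of the current point contains a point of $\mathcal{U}\setminus\mathcal{B}$, which is a placement with no $3$-dark point in $P$. One mild wrinkle for the wedge case: the wedge is unbounded, but this does not affect the argument---the configuration space is still an open subset of $(\mathbb{R}^2)^g$ of full dimension, and ``$3$-dark point lies in $W$'' only shrinks the bad set. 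The main obstacle I anticipate is bookkeeping the degenerate types carefully---dark rays that share endpoints, or $p$ coinciding with a guard, or two of the three supporting lines being equal---and confirming that in each such case the defining conditions still cut out something of positive codimension rather than accidentally being vacuous; writing out a short case analysis on how the (at most three) supporting lines and (at most six) involved guards can coincide should handle this cleanly.
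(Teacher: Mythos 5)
Your proposal is correct, but it takes a genuinely different route from the paper. You argue globally over configuration space: within the open set $\mathcal{U}\subseteq P^g$ of distinct, no-three-collinear placements, any $3$-dark point forces three dark rays on three distinct supporting lines (and, as your flagged case analysis indeed confirms, lines sharing a guard would force the dark point to coincide with that guard, which is impossible since dark rays are open at their roots and point away from their generators), so the bad placements lie in finitely many concurrency hypersurfaces, hence are nowhere dense; a generic placement, or an arbitrarily small perturbation of any placement, therefore works. The paper instead gives an incremental construction: guards are placed one at a time, and the $(i{+}1)$-st guard is simply chosen off the finite arrangement $\mathcal{A}_i$ of lines through pairs of existing guards and through a guard and an existing $2$-dark point, which prevents any new $3$-dark point from arising. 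The paper explicitly acknowledges that the lemma ``follows from general perturbation results''---which is essentially your argument---but opts for the inductive construction because it is more elementary: it needs only that finitely many lines cannot cover an open region, with no appeal to algebraic hypersurfaces, nontriviality of the concurrency polynomial, or Baire-category/measure reasoning. Your approach, in exchange, yields the slightly stronger statement that good placements are dense (so any given placement can be perturbed into one), and its only delicate point is exactly the bookkeeping of degenerate types you identified, which does go through.
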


\begin{proof}
Although this follows from general perturbation results, we give a straightforward inductive construction.

We show how to place $g$ guards in a specified open region of the plane (a convex polygon 
or near the vertex of 
a wedge)
while avoiding $3$-dark points anywhere in the plane.

Place the guards sequentially.
After placing $i$ guards,
let $\mathcal{A}_i$ be the arrangement of lines determined by:
(a)~pairs of guard points; 
and (b)~a guard point and a $2$-dark point at the intersection of two dark rays.
(For $i \le 3$ noncollinear guards, there are no $2$-dark points.)
Place the $(i+1)$-st guard at any point in 
the open region, not on a line of $\mathcal{A}_i$.
This is possible since the region is open.
Note that this avoids three collinear guards and the crossing of three dark rays. Now update the arrangement to $\mathcal{A}_{i+1}$ and repeat.
\end{proof}

Observations~\ref{obs:guards-and-rays-1},~\ref{obs:guards-and-rays-2}, and~\ref{obs:guards-and-rays-3} recast the question of how many guards are needed to $k$-guard a polygon, 
to
the question of how many guards can be packed into the polygon without creating $1$-dark, $2$-dark or $3$-dark points.
Consider the boundary between the first and second regimes of Theorem~\ref{thm:main}: when is $k$-guarding with $k$ guards possible?  
Equivalently, what is the maximum number of guards that can be placed without creating dark points? 
We can place up to $n$ guards at the polygon vertices without creating dark points.  
To prove that $n$ is a tight bound, we use the following lemma: 


\begin{lemma}
\label{lemma:create-1-dark}
Any placement of more than $n$ guards in a convex $n$-gon $P$ results in a dark point in~$P$.    
\end{lemma}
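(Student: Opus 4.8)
The plan is to show that placing $n+1$ guards in a convex $n$-gon forces a dark point inside $P$, by a counting/pigeonhole argument on how guards are distributed relative to the vertices, combined with a geometric argument that "too many guards near one region" creates a dark ray pointing into the polygon. First I would set up the basic observation: if a guard $g_2$ lies in the interior of $P$ (or more generally, is not a vertex and is "surrounded" by other guards), then a dark ray generated at $g_2$ by some visible guard $g_1$ is likely to re-enter $P$. The cleanest case: if any guard $g$ lies strictly in the interior of $P$, then for any other guard $g'$, the dark ray at $g$ generated by $g'$ starts at $g$ and heads away from $g'$; since $g$ is interior, a sufficiently short initial segment of that ray is strictly inside $P$, hence dark. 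So with $n+1$ guards, if even one is interior we are immediately done. Thus we may assume all $n+1$ guards lie on $\partial P$.

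Next I would handle guards on the boundary. Partition $\partial P$ into the $n$ open edges and the $n$ vertices. If two guards lie in the (relative) interior of the same edge $e$, then the dark ray generated by one at the other runs along the line containing $e$ — still on $\partial P$, not inside $P$ — so that alone is not enough; I need a third guard. The key structural claim is: if some edge $e$ contains a guard in its relative interior, and $P$ has at least $n+1$ guards total on the boundary, then some guard $g$ on $e$ sees a second guard $g'$ such that the dark ray at $g$ generated by $g'$ enters the interior of $P$. Concretely, take a guard $g$ in the relative interior of edge $e$; any other guard $g'$ not on the line through $e$ is visible to $g$ (convexity), and the dark ray at $g$ away from $g'$ points locally into the open halfplane on the $P$-side of $e$ — hence into $\mathrm{int}(P)$ for a short segment. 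So one interior-of-edge guard plus one guard off that edge's line already yields a dark point. The only way to avoid this is: every edge carries at most... wait — actually this shows even a single guard in the relative interior of an edge is fatal as long as there's another guard somewhere off that edge's supporting line, which with $n\ge 2$ and $n+1$ guards is automatic. So we may further assume \emph{all} $n+1$ guards are at vertices.

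Finally, with all guards at vertices and $n+1 > n$ of them, by pigeonhole two guards $g, g'$ coincide at the same vertex $v$ — but the problem statement forbids co-located guards. (Here I would double-check whether Lemma~\ref{lemma:create-1-dark} is stated for arbitrary placements or placements respecting non-co-location; the surrounding text treats "placement of guards" as the co-location-free setting.) If co-location is disallowed, the vertex-only case is vacuous and the three reductions above complete the proof. If instead one wants the statement for truly arbitrary placements (allowing co-location), then two co-located guards $g=g'=v$ trivially generate a degenerate/everywhere "dark" situation, or one instead slides to the observation that $n+1$ vertices on an $n$-gon is impossible. I expect the main obstacle to be the middle step: making rigorous the claim that a guard in the relative interior of an edge, together with essentially any other guard, produces a dark ray that genuinely enters $\mathrm{int}(P)$ rather than merely grazing $\partial P$ — this requires a careful local analysis of the direction of the dark ray at $g$ (it must point into the open side of the edge's supporting line containing $P$) and ruling out the degenerate sub-case where the "other" guard lies on that same supporting line. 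Handling that degenerate sub-case (all guards collinear on one edge's line — impossible for $n\ge 3$ since not all vertices are collinear, but one must invoke non-degeneracy of $P$) is the fiddly part, and I would isolate it as a short sub-claim using that a convex polygon has at least three non-collinear vertices.
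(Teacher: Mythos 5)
Your first reduction (a strictly interior guard immediately yields a dark point, so all guards may be assumed to lie on $\partial P$) is correct and matches the paper. But the middle step contains two genuine errors. First, you dismiss dark points on $\partial P$ as ``not enough,'' whereas the paper's setting is a \emph{closed} polygon: $P$ includes $\partial P$, coverage is required on $\partial P$, and the lemma's ``dark point in $P$'' includes boundary points. Indeed, the intended proof of the two-guards-on-one-edge case is exactly that a guard $g_1$ in the relative interior of an edge $e$, together with a second guard on $e$, produces a dark ray running along $e$ beyond $g_1$, whose points lie on $\partial P\subset P$ and are therefore dark points in $P$. Second, the geometric claim you substitute for this is false: if $g$ lies on edge $e$ and $g'\in P$ is off the supporting line of $e$, the dark ray at $g$ generated by $g'$ points \emph{away} from $g'$, i.e.\ into the open halfplane on the non-$P$ side of $e$, so it exits $P$ immediately rather than entering $\mathrm{int}(P)$. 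Moreover the stronger statement you are implicitly trying to prove (that $n+1$ guards force a dark point in the \emph{interior}) is simply false: a triangle with guards at its three vertices plus one guard at an edge midpoint has dark points only on the boundary. So your claim that ``even a single guard in the relative interior of an edge is fatal'' cannot be repaired, and with it the reduction to the all-guards-at-vertices case collapses.

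The correct route (the paper's) is short: after forcing all guards onto $\partial P$, view each edge as half-open so the $n$ edges partition $\partial P$; if any such edge contains two guards, one of them is interior to the edge and the dark ray along the edge generated by the other guard darkens boundary points of $P$, so each half-open edge carries at most one guard, giving at most $n$ guards. Your final pigeonhole step (no two guards co-located at a vertex) is fine but is the easy part; the case you actually need to control is guards in the relative interiors of edges, and that is precisely where your argument fails.
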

\begin{proof} 
If a guard $g_0$ is strictly internal to $P$, then there is a dark ray at $g_0$ generated by every
other guard. So it must be that all guards are on $\partial P$.

View each edge of $P$ as half-open, including its clockwise endpoint but not its counterclockwise endpoint.
So the edges are disjoint and their union is $\partial P$.
Every edge $e$ can contain at most one guard: If $e$ contains two or more guards, one of them, say $g_1$, is interior to $e$
and so there is a dark ray at $g_1$ along $e$.
So there can be at most $n$ guards while avoiding dark points.
\end{proof}

The boundary between the second and third regimes of Theorem~\ref{thm:main}---the
$4n-2$ threshold for $k$-guarding with $k+1$ guards---is established by the following:

\begin{theorem}
\label{thm:no-2-dark}
The maximum number of guards that can be placed in a convex $n$-gon without creating $2$-dark points is $4n-2$.
Equivalently:
\begin{itemize}
\item[(A)] 
\label{part:create-2-dark}
Any placement of more than $4n-2$ guards in a convex $n$-gon $P$ results in a $2$-dark point in~$P$.
\item[(B)] 
\label{part:avoid-2-dark}
It is possible to place $4n-2$ guards in any convex $n$-gon $P$ without creating a 2-dark point in~$P$.  
\end{itemize}
\end{theorem}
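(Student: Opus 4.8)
The plan is to prove the two directions separately, since they require rather different ideas. For part (B), the construction, the natural guess is to place guards at or very near the $n$ vertices, plus additional guards clustered in the relative interiors of the $n$ edges, using the tight count $4n - 2 = n + (3n - 2)$. A single guard strictly interior to an edge $e$ generates two dark rays along the line of $e$ (one in each direction), but both lie outside the convex polygon; the danger is a guard $g$ interior to edge $e$ together with a guard $g'$ elsewhere, whose dark ray at $g$ (pointing into $P$) could coincide with another dark ray at $g$. So the real constraint is: on each edge we may place several guards, but we must ensure that no interior point of $P$ is hit by two dark rays at once. I would place guards in small clusters near each vertex (so their mutual dark rays escape through the two incident edges), count carefully how many guards per vertex/edge keep all pairwise dark rays exterior or nonoverlapping inside, and then invoke Lemma~\ref{lemma:avoid-3-dark}-style perturbation to remove any accidental coincidences. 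Getting the bookkeeping to land exactly on $4n - 2$, rather than $3n$ or $4n$, is the delicate part of this direction; I expect the count to come from $n$ vertex guards plus $3$ guards associated with each edge, minus a small correction from shared structure at vertices, so $n + 3n - 2 = 4n - 2$.

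For part (A), the lower bound, I would argue by contradiction: suppose $g \ge 4n - 1$ guards are placed in $P$ with no $2$-dark point in $P$. First, as in Lemma~\ref{lemma:create-1-dark}, a guard strictly interior to $P$ is extremely costly (it generates a dark ray toward every other guard), so I would show that at most a bounded number of guards can be interior, and reduce to the case where most guards lie on $\partial P$. The heart of the argument is a charging scheme on $\partial P$: think of each guard on the boundary as emitting dark rays into $P$ toward every other guard it can see. Walking along an edge $e$ that contains several guards, consecutive guards on $e$ "shadow" each other along $e$, but more importantly each such guard shoots dark rays into the interior toward guards on the far side of $P$; if an edge carries too many guards, two of those interior-directed rays will cross inside $P$, creating a $2$-dark point. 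Quantifying this — how many guards an edge can hold, and how the allowance is shared between the two endpoints' "budgets" — should yield a per-vertex contribution of $4$ with an additive slack of $-2$, giving the bound $4n - 2$.

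The key step I expect to be the main obstacle is the lower bound (A): turning the intuitive "too many guards on the boundary force two dark rays to cross inside $P$" into a rigorous global counting argument. The subtlety is that dark rays at a single guard can be made to fan out in many directions, so one must simultaneously track, for every guard, the directions of all its dark rays and argue that when the total guard count exceeds $4n - 2$ some interior point lies on two of them; this likely needs a clever amortized/discharging argument assigning to each vertex of $P$ a capacity of exactly $4$ guards (e.g.\ one "at" the vertex plus roughly one more in the interior of each of the two incident edges, with the shared edge-interior guards counted fractionally), together with a careful treatment of the boundary cases (guards exactly at vertices, collinear triples, and the finitely many interior guards). I would structure the write-up so that the perturbation lemma (Lemma~\ref{lemma:avoid-3-dark}) and Lemma~\ref{lemma:create-1-dark} are reused as black boxes, and so that the edge-by-edge analysis of how many guards an edge may safely hold is isolated as its own claim before assembling the global bound.
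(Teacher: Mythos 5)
Your proposal has genuine gaps in both directions, and in each case the gap is at the central point rather than in the bookkeeping. For part (A), your plan hinges on reducing to the case where ``most guards lie on $\partial P$,'' on the grounds that an interior guard is ``extremely costly'' because it generates a dark ray toward every other guard. That reasoning only rules out \emph{$1$-dark} points (this is exactly Lemma~\ref{lemma:create-1-dark}, relevant to the $k\le n$ regime); for Theorem~\ref{thm:no-2-dark} the constraint is only that no two dark rays cross inside $P$, and interior guards are perfectly compatible with that. Indeed the extremal configuration of $4n-2$ guards has $3n-2$ of them strictly interior to $P$, so any argument that caps the number of interior guards by a constant, or that charges everything to edges of $\partial P$, cannot succeed. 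The paper's route is different: it bounds by $2n$ the number of guards that are either on $\partial P$ or corners of the convex hull $C$ of the guards (via a count of ``darkened'' vertices, $g^P=n+\tfrac12 d$ and $c\le n-d$), and then handles the remaining guards by a Triangle Lemma stating that a triangle with guards at its corners can contain at most one additional guard without creating a $2$-dark point; triangulating the $\le 2n$ hull/boundary guards gives $\le 2n-2$ triangles and hence the $2n+(2n-2)=4n-2$ bound. Some structural lemma of this kind (controlling guards in the \emph{interior} of the hull) is unavoidable, and your sketch has no substitute for it.

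For part (B), the decisive error is the appeal to a ``Lemma~\ref{lemma:avoid-3-dark}-style perturbation to remove any accidental coincidences.'' Perturbation works for $3$-dark points because three dark rays meeting at a point is a degenerate (measure-zero) event; two dark rays crossing inside $P$ is a \emph{stable, open} condition and cannot be perturbed away. With $4n-2$ guards, $3n-2$ of which must lie strictly inside $P$ (by the counting in part (A)), each interior guard carries $4n-3$ dark rays emanating into $P$, and the whole difficulty is to engineer the placement so that every pairwise crossing of these rays falls outside $P$. The paper does this with a delicate explicit construction: a serpentine triangulation with one ``elbow'' guard per triangle placed on prescribed lines through dividing points, ``safe regions'' near each vertex that confine all dark rays incident to elbow guards into disjoint wedges exiting $P$, and then three vertex guards $x_i,y_i,z_i$ per vertex placed by further linear constraints so that their dark rays also exit through controlled gaps. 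Your sketch (clusters near vertices plus edge-interior guards, with the count hoped to land on $4n-2$) does not engage with this at all, and as stated it would produce interior $2$-dark points that no perturbation can remove.
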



We prove part (A)
in Section~\secref{UpperBound}
and we prove part (B)
by a direct construction  
in Section~\secref{LowerBound}.
Both parts
are non-trivial, 
and their proofs constitute the main focus of the paper.
Assuming these results, the proof of Theorem~\thmref{ConvexPoly}  proceeds as follows:

%


\begin{proof}[Proof of Theorem~\ref{thm:main}]
\ 
\begin{enumerate}
\item For $k \le n$, $g=k$.  Clearly, $k$ guards are necessary.  For sufficiency, place $k$ guards at vertices of polygon $P$.  Then all dark rays are exterior to $P$, so this is a $k$-cover (Observation~\ref{obs:guards-and-rays-1}).

\item For $n < k < 4n-2$, $g=k+1$. 
For necessity, we argue that $k$ guards are not enough.
By Lemma~\ref{lemma:create-1-dark}, since $k>n$, there will always be a dark point in $P$. So the guards do not $k$-cover the polygon (Observation~\ref{obs:guards-and-rays-1}).

For sufficiency, we have $g = k+1 \le 4n-2$ so by  Theorem~\ref{thm:no-2-dark}(B) we can place $g$ guards without creating a $2$-dark point, thus providing a $k$-cover (Observation~\ref{obs:guards-and-rays-2}).

\item For $4n-2 \le k$, $g=k+2$.  For necessity, $k+1$ is greater than $4n-2$ so by  Theorem~\ref{thm:no-2-dark}(A) any placement of $k+1$ guards creates a $2$-dark point, so the guards do not $k$-cover the polygon (Observation~\ref{obs:guards-and-rays-2}).

For sufficiency, Lemma~\ref{lemma:avoid-3-dark} shows that we can place $k+2$ guards without creating a $3$-dark point.  These guards provide a $k$-cover (Observation~\ref{obs:guards-and-rays-3}). 
\end{enumerate}
\end{proof}

\section{Theorem~\ref{thm:no-2-dark}(A): At Most \boldmath $4n-2$ Guards}
\seclab{UpperBound}
In this section we prove that at most $4n-2$ guards can be placed in a convex $n$-gon $P$ without creating $2$-dark points in $P$.

\subsection{Triangle Lemma}
\seclab{TriangleLemma}
The following lemma is a key tool in the proof of the upper bound.
It establishes that, excluding an
exceptional case, any triangle of guards in $P$ may only
contain one additional guard if we are to avoid $2$-dark points in $T$.
\begin{lemma}[Triangle]
\lemlab{GuardTriangle}
Suppose some guards are placed in $P$ without creating $2$-dark points. Let $T$ be a closed triangle in $P$ with guards $g_1,g_2,g_3$ at its corners.
Then, with one exception, $T$ contains at most one more guard.  

The exceptional case 
allows two guards, $g_4,g_5$, in $T$
when
(up to relabeling)
$g_1 g_3$ is an edge of 
polygon $P$, 
$g_4$ lies on that edge, and
$g_5$ lies on segment $g_2 g_4$.
\end{lemma}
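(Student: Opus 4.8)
The plan is to assume the triangle $T=\triangle g_1g_2g_3\subseteq P$ contains two further guards $g_4,g_5$ and that no point of $P$ is $2$-dark, and to deduce the exceptional configuration. The single principle is that two \emph{distinct} dark rays meeting at a point of $P$ would make that point $2$-dark, a contradiction; so throughout I hunt for such a crossing. Two observations make this manageable: (a) every dark ray rooted at a corner $g_i$ points out of $T$, since $g_2,g_3,g_4,g_5\in T$ all lie in the cone at $g_i$; hence (b) the only dark rays entering $\mathrm{int}(T)$ are rooted at $g_4$ or $g_5$, and dark rays sharing a root meet only at that root. It is convenient to assume general position first (no unforeseen collinearities among the five guards, no other guard lying on a sightline we use) and handle degeneracies at the end; they only substitute one valid dark ray for another, and extra guards only add dark rays.

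\emph{Step 1 (one extra guard lies on a polygon edge).} First I rule out $g_4,g_5$ both being in $\mathrm{int}(T)$. Let $\ell$ be the line through them; relabelling corners, $\ell$ meets $\partial T$ at a point $x$ on edge $g_1g_3$ and a point $y$ on edge $g_2g_3$, separates $g_3$ from $\{g_1,g_2\}$, and has $g_4$ between $x$ and $g_5$. Inside the sub-triangle $xg_3y$, the dark ray at $g_4$ away from $g_1$ is a chord from $g_4\in xy$ to a point of $yg_3$ (it exits $T$ through the edge opposite $g_1$), and this chord separates the vertex $y$ from the side $xg_3$. The dark ray at $g_5$ away from $g_2$ is a chord from $g_5\in xy$ (which lies on the $y$-side of the first chord, as $g_5$ is between $g_4$ and $y$) to a point of $xg_3$ (opposite $g_2$), so it must cross the first chord; that crossing is a $2$-dark point in $\mathrm{int}(P)$, a contradiction. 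So some extra guard, say $g_4$, lies on an open edge of $T$. If that edge were not contained in an edge of $P$ it would lie in $\mathrm{int}(P)$; a short case check on the position of $g_5$ (which by the same reasoning also avoids polygon edges) again forces a crossing — e.g. a subsegment of $\partial P$ or of the line through two of the five guards becomes doubly dark when $g_5$ is collinear with $g_4$ and two corners, and otherwise the ray from a corner through $g_5$ hits the dark ray that runs along that chord. Hence the $T$-edge through $g_4$ lies on an edge $E$ of $P$; and if a corner of $T$ on $E$ were strictly interior to $E$, the dark ray from it away from $g_4$ and the dark ray at $g_4$ away from the other corner would overlap along $E$, so the two corners on $E$, relabelled $g_1,g_3$, are exactly the endpoints of $E$: thus $g_1g_3$ is an edge of $P$ with $g_4$ in its interior.

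\emph{Step 2 (locating $g_5$).} The cevian $g_2g_4$ splits $T$ into $\triangle g_1g_2g_4$ and $\triangle g_2g_3g_4$, and $g_5$ lies in one of them. If $g_5\in\mathrm{int}(\triangle g_1g_2g_4)$, the dark ray at $g_5$ away from $g_2$ leaves that triangle through side $g_1g_4\subseteq g_1g_3$ at an interior point $p$ of $g_1g_4$, and $p$ also lies on the dark ray at $g_4$ away from $g_3$ (which runs along $g_1g_3$ from $g_4$ toward $g_1$), so $p$ is $2$-dark; the case $g_5\in\mathrm{int}(\triangle g_2g_3g_4)$ is symmetric. If $g_5$ lies on the open edge $g_1g_2$ (resp.\ $g_2g_3$) of $T$, then $g_1$ (resp.\ $g_3$) is blocked from both other corners, hence lies on two dark rays and is $2$-dark; and if $g_5$ lies on the open edge $g_1g_3$ of $T$, then $g_1,g_4,g_5,g_3$ are collinear and the open segment between $g_4$ and $g_5$ lies on two oppositely directed dark rays along $g_1g_3$, hence is $2$-dark. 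The only surviving location is $g_5\in g_2g_4$, which is the exceptional configuration.

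\emph{Main obstacle.} The difficulty is organizational rather than deep: getting the separation argument in Step 1 exactly right (confirming the two relevant dark rays land in the same sub-triangle and are forced to cross), and then marshalling the case analysis so that every possible location of $g_4$ and of $g_5$ — interior of $T$, interior of a $T$-edge that is a chord of $P$, interior of a $T$-edge lying on a polygon edge, or on the cevian $g_2g_4$ — is covered, while carefully tracking which collinearity or polygon-edge hypotheses each sub-argument actually uses. Secondary care goes into checking, case by case, that each invoked dark ray is genuine (the relevant guards see each other, accounting for possible blocking by other guards) and that the claimed doubly-dark point indeed lies in $P$, and into verifying that the exceptional configuration itself creates no $2$-dark point, so the exception is not vacuous.
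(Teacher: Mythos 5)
Your proposal takes essentially the same route as the paper's proof: rule out two strictly interior extra guards via forced crossing of their dark rays, conclude one extra guard $g_4$ lies on an edge of $T$, force the other extra onto the cevian $g_2 g_4$ (your Step~2 case check matches, and is even a bit more careful than, the paper's), and use the collinear dark rays along $g_2 g_5 g_4$ and along $g_1 g_4 g_3$ to push $g_4$ onto $\partial P$ and $g_1,g_3$ to vertices, so $g_1 g_3$ is an edge of $P$. Two small points to tidy when writing it up: your framing claims (a)--(b) are only valid for dark rays generated by the five guards in $T$ (the lemma allows other guards in $P$, whose rays can enter $T$), though your contradictions never actually rely on them; and the lemma also needs the one-line observation, as in the paper, that a third extra guard is impossible because it too would be forced onto $g_2 g_4$, giving two guards strictly interior to $T$.
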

\begin{proof}
Refer to Fig.~\figref{AnnaPfTri2}(a,b) throughout.
We first prove that there cannot be two guards strictly interior to $T$.
Suppose $g_4$ is strictly interior to $T$.
Then $g_1,g_2,g_3$ generate three dark rays at
$g_4$, each of which crosses a different edge of $T$.
The same would be true for a second strictly interior guard $g_5$. 
So a dark ray 
at $g_5$ must cross a dark ray at $g_4$ to reach an edge of $T$. 
The result is a $2$-dark point, marked $x$ in (a) of the figure.
Since we assumed no $2$-dark points in $P$, there cannot be two extra guards interior to $T$.

To prove the lemma,  suppose there is  more than one extra guard.  We establish the conditions for the exceptional case.  By the above argument, a guard must lie on an edge of $T$.
Suppose 
that $g_4$ lies on edge $e=g_1 g_3$ of $T$.
Then left and right of $g_4$ on $e$ are dark rays 
generated by $g_1$ and $g_3$.
Placing another guard $g_5$ at any point not 
on segment $g_2 g_4$
leads to a dark ray at $g_5$, 
generated by $g_2$, crossing $e$ to form a $2$-dark point there. 
Since there is at most one guard strictly internal to $T$, there cannot be yet another guard $g_6$ also on segment $g_2 g_4$.

We are left with the situation illustrated in (b) of the figure, where there are two extra guards: $g_4$ lying on edge $g_1 g_3$ of $T$ and $g_5$ lying on segment $g_2 g_4$.  There are no $2$-dark points inside $T$.
It remains to prove that $g_1 g_3$ is actually an edge of polygon $P$.
%
The  dark ray at $g_5$ generated by $g_2$  contains the dark ray at $g_4$ generated by $g_5$ so, to avoid $2$-dark points inside $P$, $g_4$ must be on the boundary of $P$.  By the same argument, $g_1$ and $g_3$ must be vertices of $P$.
\end{proof}
\begin{figure}[htbp]
\centering
\includegraphics[width=0.8\textwidth]{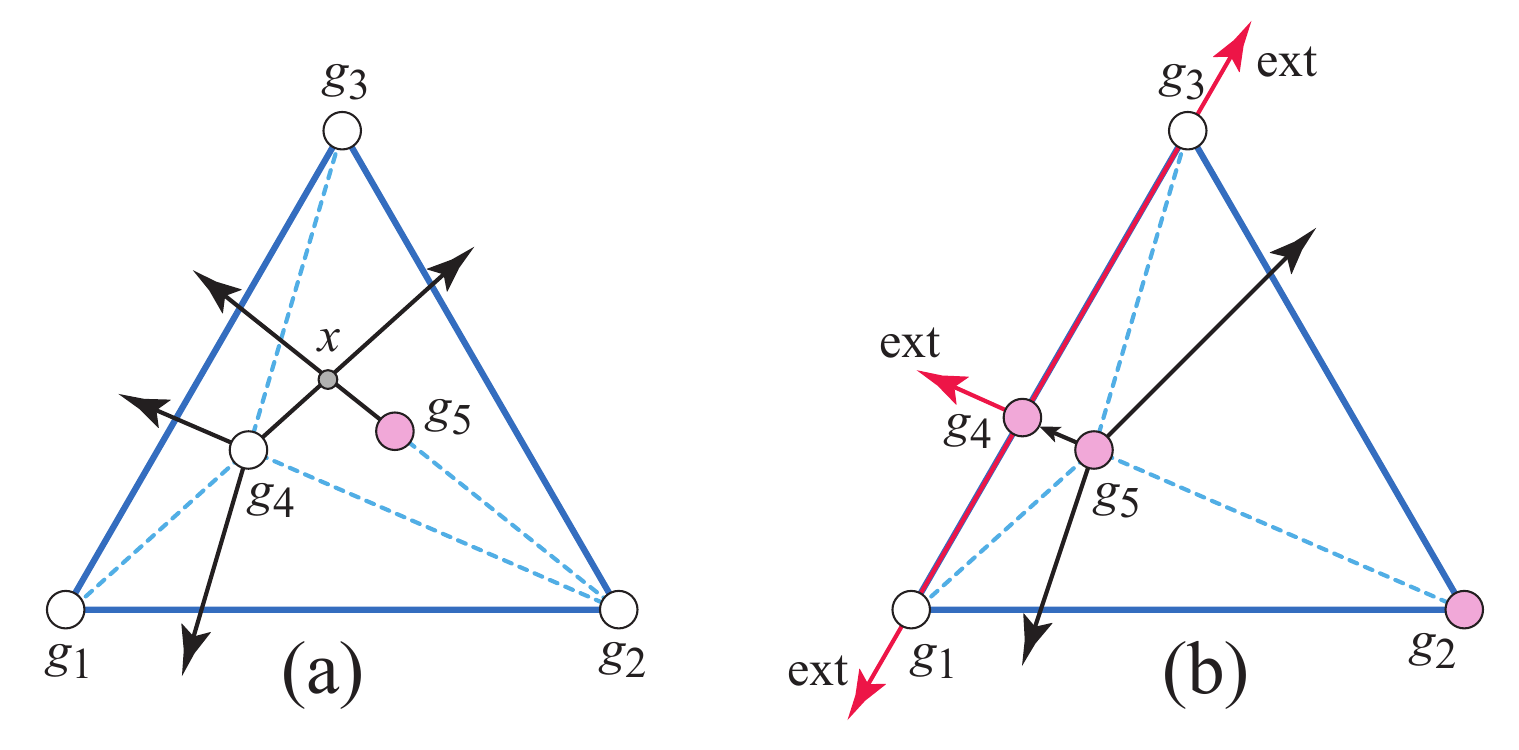}
\caption{
In this and following figures, guards are indicated by hollow circles.
(a)~Generic placements of $g_4,g_5$ produce a $2$-dark point $x$.
(b)~The exceptional case, with dark rays exterior to~$P$.
}
\figlab{AnnaPfTri2}
\end{figure}


\medskip
We now turn to the $4n-2$ bound.
Consider a placement of guards in $P$ such that there are no $2$-dark points in $P$. Our goal is to prove that there are at most $4n-2$ guards.
Let \defn{$C$} be the convex hull of the guards.
The main idea is as follows.
We will show in Lemma~\lemref{Chull} that 
the number of guards on $\partial C$, not counting collinear 
guards interior to $P$, is at most $2n$.
Triangulating $C$ leads to at most $2n-2$ triangles.
Lemma~\lemref{GuardTriangle} then shows that there is at most one extra guard 
inside each triangle,
which leads to the $4n-2$ upper bound.
To make this rigorous, we must take into account collinear guards and the exceptional case of Lemma~\lemref{GuardTriangle}.

First shrink $P$ 
so that it maximally touches $C$, as follows.
Move each edge of $P$ parallel to itself 
toward the interior until it hits a guard.
If an edge $e$ only has a guard at one endpoint, then rotate $e$ about that endpoint toward
the interior until it hits another guard.
The 
reduced polygon contains all the guards, has no $2$-dark point, and has at most $n$ vertices, so it suffices to prove the bound on the number of guards for the 
reduced polygon.
Henceforth we 
adopt the \emph{Shrunken Polygon Assumption}:
every edge of $P$ has either one or more guards in its interior, or a guard at 
both endpoints.

The proof requires careful handling of collinear guards:
a guard is called \defn{collinear}
if it lies on a line between two other guards.

Define \defn{$G^*$} 
to consist of the guards on $\partial P$ together with any guard
that is a corner of $C$ in the interior of $P$.
So collinear guards on $\partial P$ are in $G^*$, but collinear guards
on $\partial C$ and internal to $P$ are excluded from $G^*$.
See Fig.~\figref{gstar}.
Define \defn{$g^*$}$=|G^*|$.
This is the key count that is needed to complete the upper-bound proof.
\begin{figure}[htbp]
\centering
\includegraphics[width=0.8\textwidth]{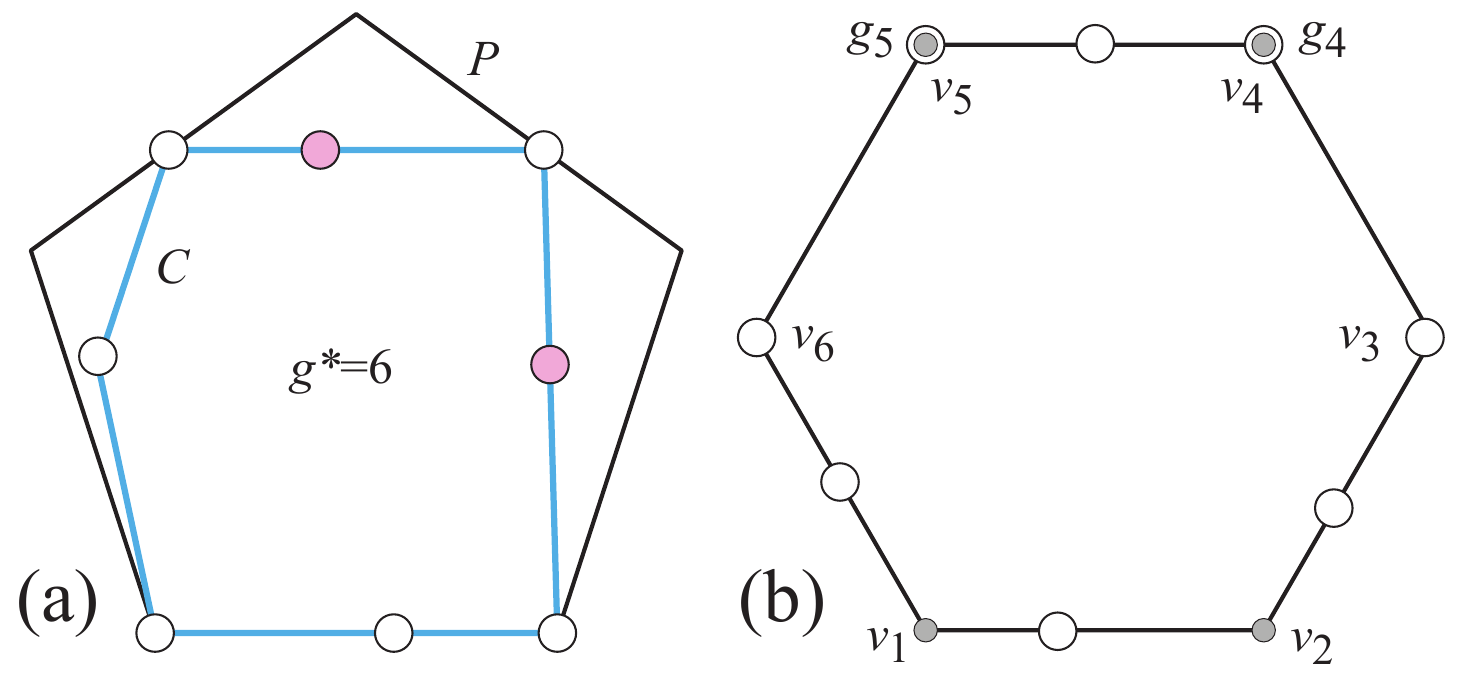}
\caption{(a)~The two pink guards are not included in 
\anna{$G^*$}.
(b)~$v_1,v_2$ are 
darkened 
but have no guard; $g_4,g_5$ are both guards and 
darkened vertices.
So $d=4$ and 
$g^P = n + \frac{1}{2} d = 8$.}
\figlab{gstar}
\end{figure}

\begin{lemma}
\lemlab{Chull}
The number of guards $g^*$ as defined above is at most $2n$.
\end{lemma}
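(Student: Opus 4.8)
**Proof proposal for Lemma (Chull): $g^* \le 2n$.**

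My plan is to charge each guard in $G^*$ to an edge of $P$ and show each edge receives at most two charges. Recall $G^*$ consists of all guards on $\partial P$ together with the corners of $C$ that lie strictly inside $P$. Under the Shrunken Polygon Assumption, every edge of $P$ carries a guard in its relative interior or has a guard at both endpoints. First I would handle the guards on $\partial P$: walking around $\partial P$, each edge $e$ of $P$ can contain several collinear guards, but exactly as in Lemma~\ref{lemma:create-1-dark}, the ones strictly interior to $e$ generate dark rays \emph{along} $e$ — however here we are only forbidding $2$-dark points, so we can afford \emph{one} interior guard on $e$ but not two (two interior guards on $e$, say $g, g'$ with $g$ between $g'$ and an endpoint, give a dark ray at $g$ along $e$; combined with the dark ray generated along $e$ at $g$ by the other endpoint's direction — actually the cleaner statement: an edge with $\ge 3$ collinear guards on it has a middle one with dark rays on both sides both generated, yielding... ). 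The precise bookkeeping I expect: each edge of $P$ contributes at most $2$ guards to $G^*$ via its own interior/endpoints, counting shared vertices with the standard half-open convention, giving at most $2n$ from boundary guards alone \emph{if there were no interior-of-$P$ hull corners}.

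The substantive part is the interior corners of $C$. Here is the key geometric idea I would push: if $q$ is a corner of $C$ strictly inside $P$, then $q$ is a reflex-free vertex of the convex polygon $C$, so the two hull edges at $q$ span an angle $<\pi$, and every \emph{other} guard lies in that wedge at $q$; consequently every other guard generates a dark ray at $q$ pointing \emph{out of} the wedge, i.e.\ into the halfplane not containing the guards. Since there are no $2$-dark points in $P$, at most one such dark ray can meet $P$; but $q$ is interior to $P$, so a whole neighborhood of $q$ is in $P$, meaning essentially \emph{all} these dark rays immediately enter $P$ — forcing there to be at most... one other guard, unless the configuration degenerates. This is exactly the mechanism of the Triangle Lemma applied at $q$. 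So an interior hull corner $q$ is extremely constrained: it forces $C$ to be a segment or triangle-like object, or $q$ to coincide with an endpoint of an edge of $P$ (contradiction since $q$ is interior). I would turn this into: \emph{there are no} corners of $C$ strictly interior to $P$, \emph{or} the guard set is so small ($\le 3$ guards, or collinear) that $g^* \le 2n$ holds trivially for $n \ge 1$. Either way the bound follows, with the boundary count above closing the argument.

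The main obstacle, and where I would spend the most care, is the interaction between collinearity and the half-open edge convention when $C$ shares a full edge (or part of an edge) with $\partial P$: a hull edge of $C$ lying along an edge $e$ of $P$ can carry several collinear guards that are all on $\partial P$ and hence all in $G^*$, and I must verify the $2$-dark-point constraint still caps these at two per edge of $P$ rather than letting a long collinear run of guards on one edge blow up the count. I would resolve this by the same dark-ray-along-$e$ argument: among collinear guards on a line segment contained in $\partial P$, any guard with guards on both sides of it (along that line) sees two dark rays collinear and oppositely directed from two different generators — but one of those rays runs along $\partial P$ (permissible, measure zero in $P$), so I need a transversal generator off the line; the Shrunken Polygon Assumption plus convexity of $P$ guarantees such a generator exists whenever the run has length $\ge 3$, producing a $2$-dark point just inside $P$. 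Hence at most two collinear guards per edge-line, and summing over the $n$ edges gives $g^* \le 2n$.
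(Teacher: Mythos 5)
Your boundary count is salvageable (with the half-open edge convention, a $2$-dark-point argument does cap the guards of $\partial P$ at two per edge, and note that a $2$-dark point created \emph{on} an edge already counts, since $P$ is closed and guarding includes $\partial P$ --- you do not need a ``transversal generator off the line''). But the second half of your argument, which is where the real content lies, has a genuine gap: you claim that a corner $q$ of $C$ strictly interior to $P$ forces ``at most one other guard'' because all other guards generate dark rays at $q$ that enter $P$. That inference is false. Distinct dark rays emanating from the common apex $q$ point in different directions and meet only at $q$ itself (and they are open at $q$), so they never stack up to form a $2$-dark point by themselves; a $2$-dark point needs two dark rays that actually overlap or cross, which rays from a single apex do not do. Consequently interior corners of $C$ are not forbidden at all --- indeed the extremal configuration attaining $g^* = 2n$ (Fig.~\figref{C_corners}(c)) has one guard interior to each edge of $P$ and $n$ additional guards that are corners of $C$ strictly inside $P$. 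Your proposed conclusion ``there are no corners of $C$ strictly interior to $P$, or the guard set is tiny'' contradicts this configuration, so the route cannot be repaired as stated.

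The missing idea is a trade-off, not an exclusion. The paper's proof splits $g^* = g^P + c$ and bounds each piece in terms of the number $d$ of \emph{darkened} vertices of $P$: a per-edge count (vertex guards counted half to each incident edge) gives $g^P = n + \tfrac{1}{2}d$, and each interior corner $g_0$ of $C$ is charged to a vertex $v$ of $P$ lying between the two boundary guards adjacent to $g_0$ on $C$; the dark rays at $g_0$ generated by those two neighbors cross the edges at $v$, so $v$ cannot also be darkened, giving $c \le n - d$. Adding yields $g^* \le 2n - \tfrac{1}{2}d \le 2n$. So the correct mechanism is that interior hull corners and darkened vertices compete for the same $n$ vertices of $P$; your attempt tries instead to eliminate interior corners outright, which the tight example shows is impossible.
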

\begin{proof}
Let $g^P$ be the number of guards on $\partial P$ and let $c$ be the number of guards that are corners of $C$ in the interior of $P$.
As noted above, $g^* = g^P + c$.  We will bound $g^P$ and $c$ separately.  Both bounds are in terms of the number of darkened vertices, where a vertex $v$ of $P$ is \defn{darkened} if guards on $\partial P$
generate a dark ray through $v$.

We first bound $g^P$.
The main observation we will use to limit $g^P$
is that a vertex $v$ cannot be darkened from both
incident edges, as that would 
render $v$ a $2$-dark point.
The idea is to count guards and darkened vertices per edge.  A guard internal to an edge counts towards the edge, and a vertex guard counts half towards each incident edge.  More precisely, 
for an edge $e$, let $g(e)$ be the number of guards internal to $e$ plus half the number of vertex guards on $e$.  Then $g^P = \sum_e g(e)$.  

Fig.~\figref{HalfCounts} shows the possibilities: $g(e) =2$,
either from two internal guards, or one internal guard and two endpoint guards; 
$g(e) = 3/2$ 
from one endpoint guard and one internal guard; or $g(e)=1$ from one internal guard or two endpoint guards.

These are the only possibilities:
(a)~By the Shrunken Polygon Assumption, every edge has at least one guard,
and if it has only one guard, the guard must be internal to the edge.
(b)~All possibilities for two guards on an edge are included. 
(c)~An edge can only have three guards 
when two are at the endpoints of the edge: an endpoint without a guard would be rendered $2$-dark by the three guards on the edge.
(d)~An edge cannot have four or more guards, as then the extreme points would be 
at least $2$-dark.

Next we observe from Fig.~\figref{HalfCounts} a relationship between $g(e)$ and \defn{$d(e)$},
which is defined to be the number of dark rays on edge $e$ generated by guards on $e$: 
if $g(e)=2$ then $d(e)=2$;
if $g(e)=1 \frac{1}{2}$ then $ d(e)=1$;
and if $g(e)=1$ then $ d(e)=0$.  Equivalently, 
$d(e) = 2(g(e) - 1)$. 

Finally, we note that \defn{$d$}, defined to be the number of darkened vertices, is $\sum_e d(e)$, since each dark ray on $e$ darkens an endpoint of $e$, and no vertex can be darkened from both incident edges.

Putting these together,
$$d = \sum_e d(e) = \sum_e 2(g(e)-1) = 2 \sum_e g(e) -2n = 2 g^P -2n$$
which gives 

\begin{equation}
\label{eqn:g-sup-P}
g^P = n + \tfrac{1}{2} d \ .
\end{equation}

For example, for even $n$,
placing a guard at every vertex and a guard in the interior of every other edge
darkens every vertex, so $g^P = \frac{3}{2} n$.

\begin{figure}[htbp]
\centering
\includegraphics[width=0.8\textwidth]{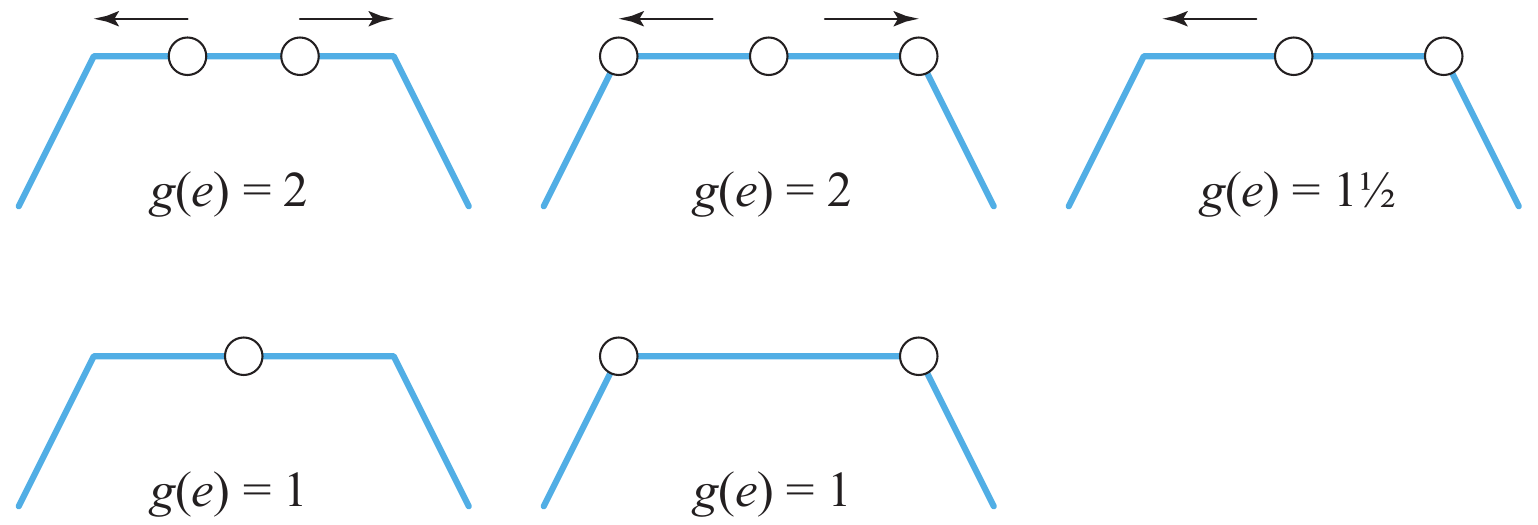}
\caption{Edge counts. Arrows indicate darkened vertices.}
\figlab{HalfCounts}
\end{figure}

We next bound $c$,  
the number of guards strictly 
interior to $P$ 
that are corners of $C$.
Let $g_0$ be such a corner guard.
Moving 
clockwise and counterclockwise
on $C$, let $g_1$ and $g_2$ be the first guards that are on $\partial P$, say on edges $e_1$ and $e_2$.
(Recall that every edge of $P$ has a guard, and so $g_1$ and $g_2$ exist and are distinct.)
Note that there cannot be another vertex of $C$ internal to $P$ between $g_1$ and $g_2$,
as then two dark rays would cross inside $P$:
see Fig.~\figref{C_corners}(a).
Also note that $g_0$ is not collinear with $g_1$ and $g_2$, because we
are counting $g^*$, which excludes collinear guards on $C$.
Since every edge has a guard, edges $e_1$ and $e_2$ must be incident at a vertex $v$ of $P$, and $v$ has no guard (because otherwise $g_0$ would be internal to $C$).  The dark rays incident to $g_0$ from $g_1$ and $g_2$ cross $e_1$ and $e_2$ as shown in Fig.~\figref{C_corners}(b).  So $v$ cannot be darkened by the guards on $e_1$ or $e_2$ otherwise again two dark rays would cross. 

Thus each guard $g_0$ counted in $c$ 
corresponds to a non-darkened vertex,
so $c \le n-d$. 

Combining with Equation~\ref{eqn:g-sup-P},
$$g^* = g^P + c \le n + \tfrac{1}{2}d + (n-d) = 2n - \tfrac{1}{2}d \le 2n \;.$$

Equality is achieved when there is one guard internal to each edge, and one guard inside $P$ between each consecutive pair, 
and no collinear guards nor darkened vertices of $P$. See Fig.~\figref{C_corners}(c).
\end{proof}

\begin{figure}[htb]
\centering
\includegraphics[width=0.6\textwidth]{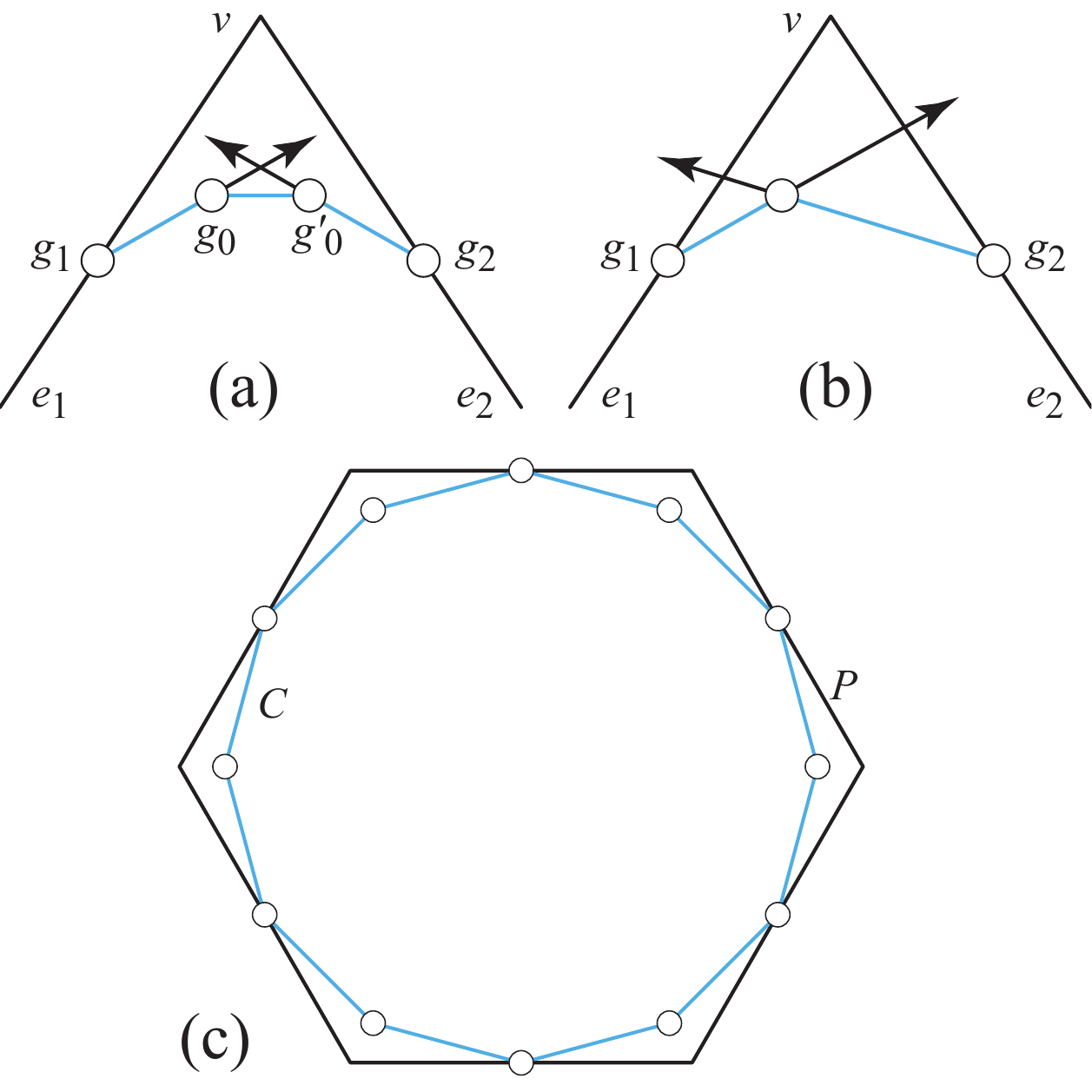}
\caption{(a)~$g_0$ and $g'_0$ create intersecting dark rays in $P$.
(b)~$v$ cannot be a darkened vertex.
(c)~The upper bound $g^*=2n$ can be achieved.}
\figlab{C_corners}
\end{figure}

We are now ready to prove the main result of this section.


\noindent
\begin{proof}[Proof of Theorem~\ref{thm:no-2-dark}(A)]
Consider a placement of guards inside $P$ that avoids $2$-dark points. 
We prove that the number of guards is at most $4n-2$.
We use $G^*$ and $g^*$ as defined above.
By Lemma~\lemref{Chull},
$g^* \le 2n$. Triangulate the guards in $G^*$, 
i.e., add a maximal set of non-crossing chords between pairs of guards in $G^*$.
Recall that $G^*$
includes collinear guards on $\partial P$ but excludes collinear guards internal to $P$.

There are at most $2n-2$ triangles in this triangulation.  
We will apply Lemma~\lemref{GuardTriangle}. Note that  the exceptional case of the lemma only happens when 
a guard $g$ inside the triangle lies on $\partial P$,
but all the guards on $\partial P$ were already included in $G^*$ so the exceptional case cannot occur
because there is a triangulation edge incident to $g$.
Thus, by
Lemma~\lemref{GuardTriangle}, there is at most one extra guard in each triangle, 
for a total of at most $2n + (2n-2) = 4n-2$ guards.
\end{proof}

\section{Theorem~\ref{thm:no-2-dark}(B): Placing \boldmath $4n-2$ Guards}
\seclab{LowerBound}
The challenge is to 
locate $g=4n-2$ guards so that 
there are no $2$-dark points in $P$, i.e., so that no two dark rays intersect inside the polygon.

We first show in Section~\ref{sec:4n-2achievable} how to place $g=10$ guards in a triangle and $g=14$ guards in a square 
without dark rays intersecting, while hinting at the general strategy.
Section~\secref{GeneralLowerBound} contains the general proof.

\subsection{\boldmath $g=4n-2$ guards achievable for triangle and square}
\seclab{4n-2achievable}
Fig.~\figref{FreddyTriangle} illustrates a placement of $10$ guards
in a triangle $P$
such that all dark-ray intersections are strictly exterior to $P$.
Although it is difficult to verify visually,
even enlarged, 
Appendix~\secref{GuardCoords}
verifies that all dark-ray intersections
lie strictly exterior to the triangle.
This demonstrates $g=4n-2$ is achievable for triangles.

\begin{figure}[htbp]
\centering
\includegraphics[width=0.6\textwidth]{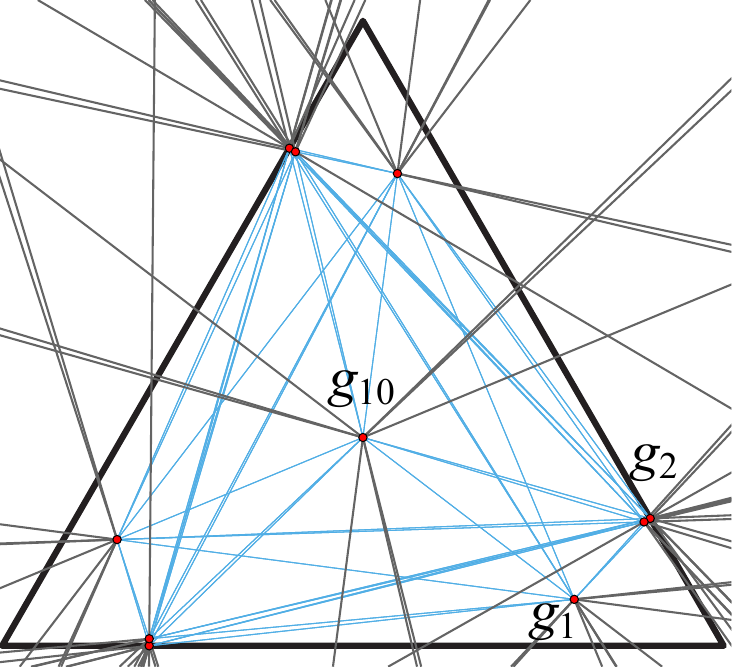}
\\[5ex]
\includegraphics[width=0.6\textwidth]{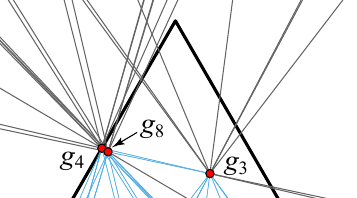}
\caption{$g=10$ guards $9$-covering a triangle.
Apex enlargement below. 
Indexing follows Fig.~\protect\figref{TriangleDarkWedges}.}
\figlab{FreddyTriangle}
\end{figure}
Several features of this construction will repeat for general $n$-gons:
\begin{enumerate}[label={\rm(\arabic*)}]
\squeezelist
\item $n$ guards are on edges of $P$.
\item $2n$ guards are on the hull $\partial C$ (the maximum by Lemma~\lemref{Chull}). 
Recall that $C$ is the convex hull of the guards.
\item Three guards are placed near each vertex,
\item Two of the three guards near a vertex are 
nearly co-located.
\item 
There is one extra guard in each triangle of a triangulation of $P$ (this is $g_{10}$ in Fig.~\figref{FreddyTriangle}).
\end{enumerate}
This construction leads to
three guards near each of $P$'s $n$ vertices,
plus $n-2$ guards in the triangles of a 
triangulation
\anna{of $P$}, yielding $g=4n-2$.

\paragraph{Idea of the construction in Fig.~\figref{FreddyTriangle}.}
Before turning to the general construction, we first provide intuition for the triangle construction,
illustrated in Fig.~\figref{TriangleDarkWedges}.
The triangle is partitioned into six sectors with $g_{10}$ in the center.
Three guards are placed in the yellow sectors near each vertex, so 
that the dark rays they generate at $g_{10}$
exit through the empty white sectors.
First, two of three guards are placed as illustrated: $g_2,g_4,g_6$ on triangle edges,
and $g_1,g_3,g_5$ slightly inside the adjacent edges.
The final three guards will be placed inside the convex hull of $g_1, \ldots, g_6$, but their locations are tightly constrained.
The guards placed so far
define three dark wedges apexed at 
guards $g_1,g_3,g_5$, where
the wedge apexed at $g_i$ contains 
all the dark rays at $g_i$.
The last three guards $g_7,g_8,g_9$ are placed quite close to the even-index guards $g_2,g_4,g_6$
so that none of their dark rays enter the dark wedges.  For further explanation, see
Section~\secref{GeneralLowerBound}.
The construction works for any triangle:
there are no shape assumptions.

\begin{figure}[htb]
\centering
\includegraphics[width=0.75\textwidth]{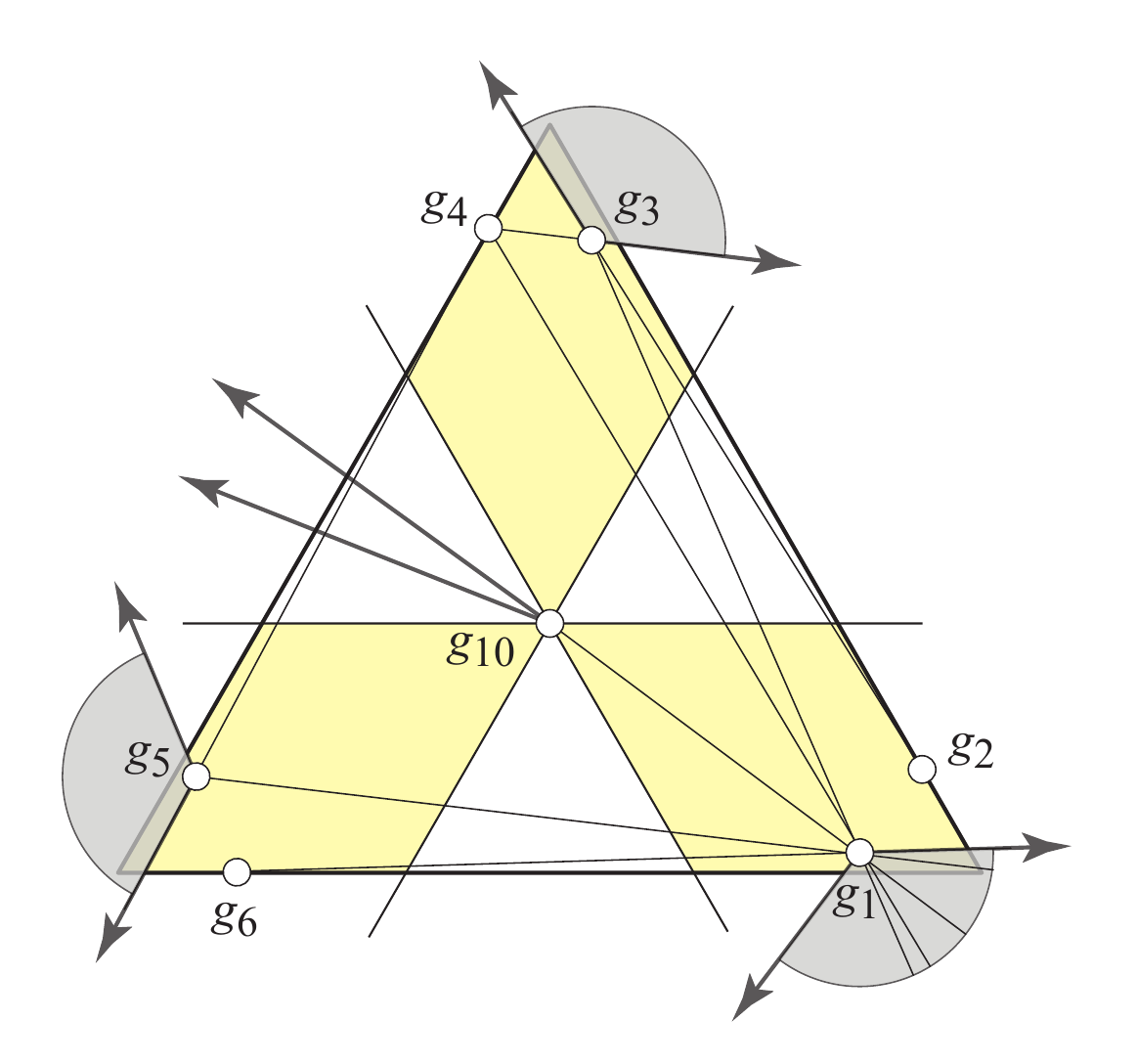}
\caption{Dark rays from $g_{10}$ exit through empty white sectors.
Dark wedges apexed at $g_1,g_3,g_5$ contain the dark rays from all other guards,
illustrated for the $g_1$ wedge.}
\figlab{TriangleDarkWedges}
\end{figure}

\paragraph{Example: Square.}
Placing $4n-2=14$ guards in a square
without any $2$-dark points
follows the same construction as with the triangle:
three guards near each vertex, and
$n-2=2$ 
extra guards $\ell_i$, that we refer to as
``elbow" guards, 
one in each triangle of
a special
triangulation, in this case just a diagonal of the square.
See Fig.~\figref{Square_FS}.
Coordinates may be found in 
Appendix~\secref{GuardCoords}. 
\begin{figure}[htbp]
\centering
\includegraphics[width=0.8\textwidth]{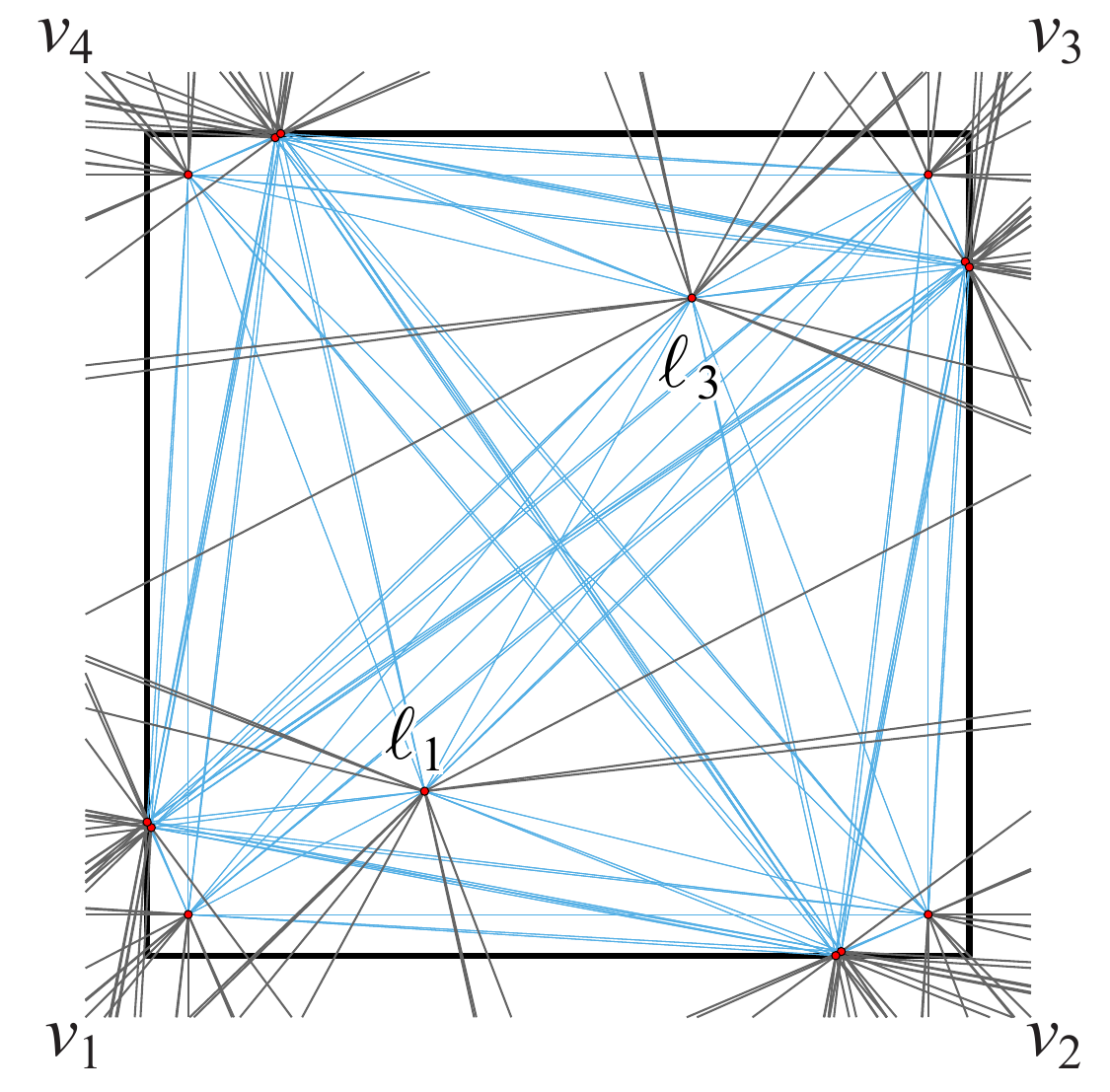}
\caption{
$14$ guards 
$13$-covering a square.
Triangulation diagonal is $v_1 v_3$,
with elbow guards $\ell_1,\ell_3$, one per triangle, plus three vertex guards near each corner.
}
\figlab{Square_FS}
\end{figure}



\noindent
%
\subsection{General Construction}
\seclab{GeneralLowerBound}
%
%


\paragraph{Overall Construction.}
The overall plan of the construction is the same as for a triangle and a square:
$3n$ guards, three near each vertex, plus
one guard per triangle in a triangulation of $P$ of $n-2$ triangles.
The three guards to be placed near $v_i$  
we call
\defn{vertex guards}.
The triangulation is a
\defn{serpentine} triangulation formed by a \defn{zigzag path} that visits all the vertices, as illustrated in Fig.~\figref{ZigZag}. 
The single guard in each triangle will be called an 
\defn{elbow} guard. 

\begin{figure}[htbp]
\centering
\includegraphics[width=0.55\textwidth]{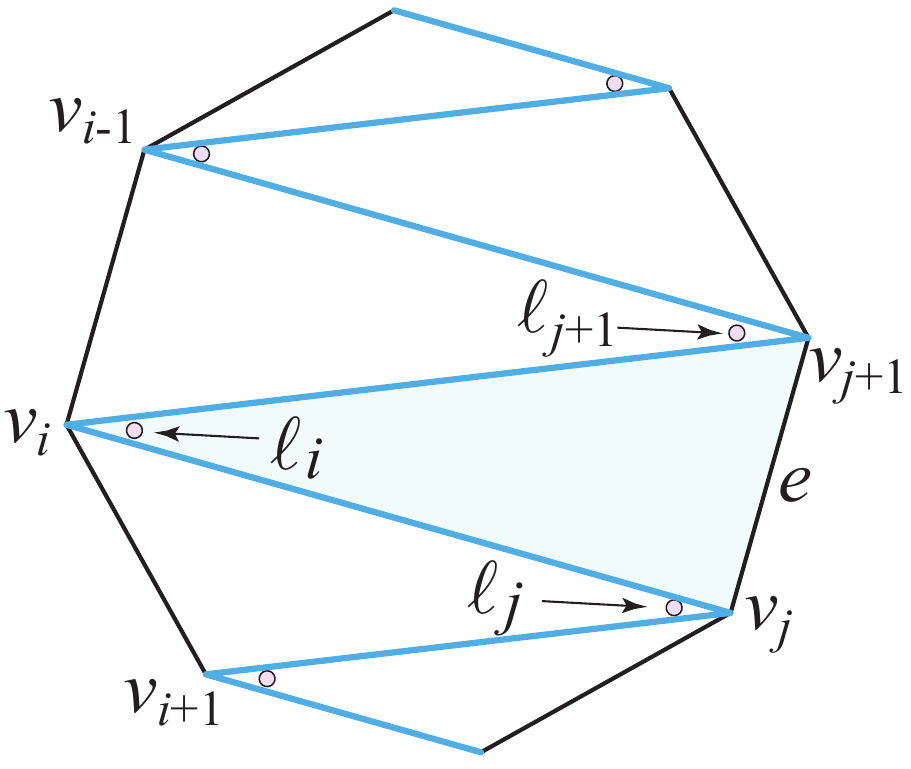}
\caption{Zigzag triangulation and elbow guards $\ell_i$.
}
\figlab{ZigZag}
\end{figure}

\paragraph{Notation.}
We label the vertices in counterclockwise (ccw) order: 
$v_0, \ldots, v_{n-1}$ with index arithmetic modulo $n$.
Thus ``before" means clockwise (cw) and ``after" means ccw.
Let $v_i$ be one  of the $n-2$ internal vertices of the 
zigzag path.  Then $v_i$ is the apex of a triangle $T_i$ bounded by two edges of the zigzag path plus a \defn{base} that is an edge of the polygon. The elbow guard of $T_i$, which we denote $\ell_i$, will be placed close to vertex $v_i$. 
Note that 
the two vertices of $P$ that are the endpoints of the zigzag path have no elbow guard,
and consequently either $\ell_j$  or $\ell_{j+1}$ (or both) might not exist.
For example, in 
the square construction (Fig.~\figref{Square_FS}), 
the zigzag path is $v_4, v_3, v_1, v_2$ so
neither $\ell_2$ nor $\ell_4$ exist.

For ease of notation, we will focus on one triangle with apex $v_i$ and base $v_j v_{j+1}$.
See Fig.~\figref{ElbowWedges}.
In each edge of $P$ we place two ``dividing points'' that are
used to separate wedges of dark rays.
The dividing points adjacent to $v_i$ are labeled
$m_i$ (on the minus (cw) side) and $p_i$ (on the plus (ccw) side).

\paragraph{Dark-ray Wedges.}
The elbow guard $\ell_i$ will be located close to $v_i$, and $v_i$'s three vertex guards will be
even closer to $v_i$. 
We first place the elbow guards and define ``safe regions''
for vertex guards 
so that the dark rays incident to  elbow guards lie in disjoint ``dark ray wedges.''
Exact placement of vertex guards will be described later.

Let $e$ be the base edge of $T_i$, $e=v_j v_{j+1}$.
Refer to Fig.~\figref{ElbowWedges}.
The three portions of $e$ demarcated by $p_j, m_{j+1}$ each are crossed by
wedges of dark rays incident to elbow guards.
The central portion of $e$ is crossed by rays generated by $v_i$'s vertex guards
through $\ell_i$ (blue).
The $v_j p_j$ segment of $e$ is crossed by the rays at $\ell_j$,
generated by all the
vertex guards and elbow guards associated with vertices
ccw from $v_{i+1}$ to $v_{j-1}$,
and symmetrically the $m_{j+1} v_{j+1}$ segment of $e$ is crossed by dark rays
at $\ell_{j+1}$, generated by 
all the
vertex guards and elbow guards associated with vertices 
ccw from $v_{j+2}$ to $v_{i-1}$.


From the viewpoint of $\ell_i$, 
there are three dark wedges emanating from it, one crossing $p_j m_{j+1}$ and two
(shown in pink) crossing $v_i m_i$
and $v_i p_i$, before and after $v_i$.
\begin{figure}[htb]
\centering
\includegraphics[width=0.65\textwidth]{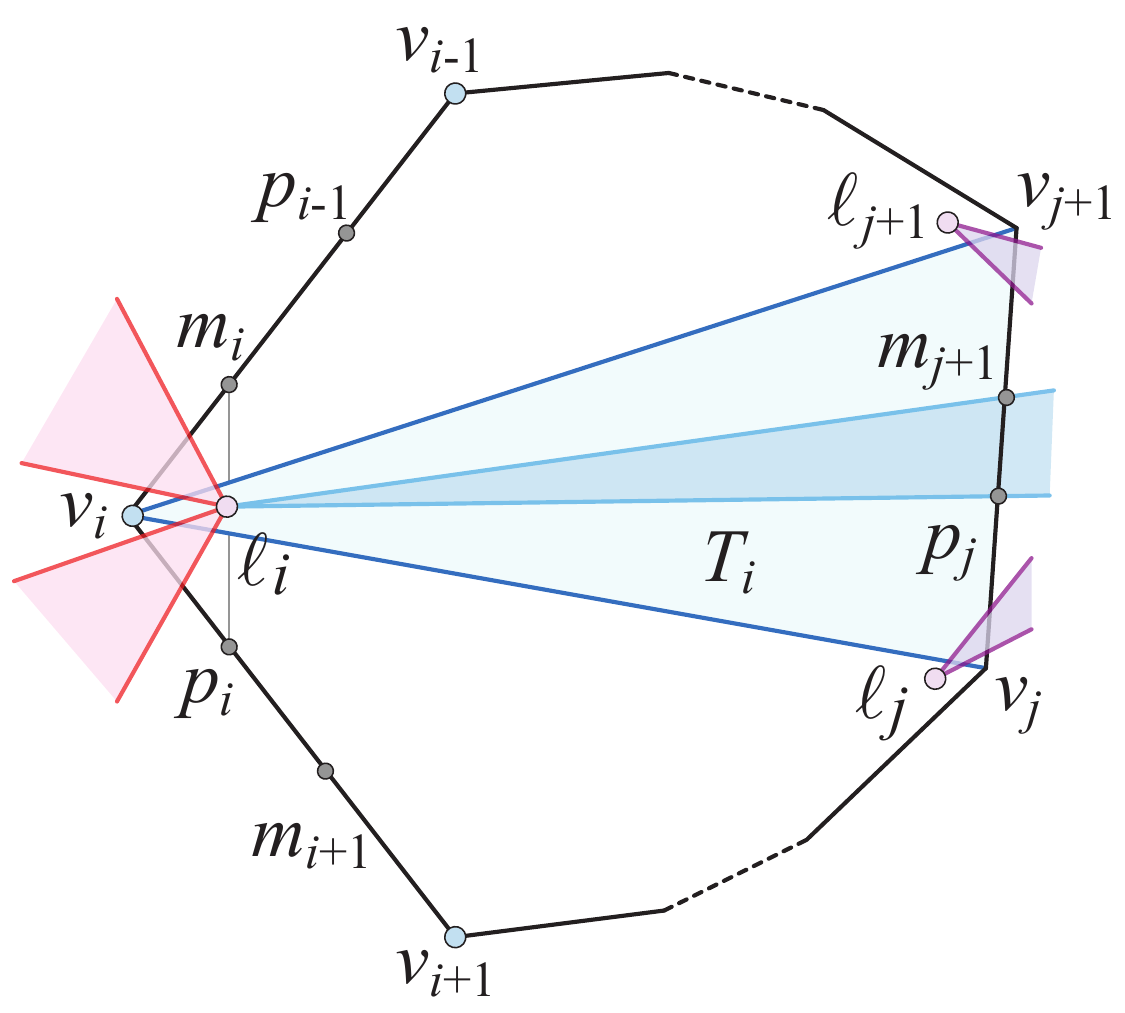}
\caption{
The dark-ray wedges that cross $e=v_j v_{j+1}$ and the dark-ray wedges emanating from~$\ell_i$.
}
\figlab{ElbowWedges}
\end{figure}


\paragraph{Locating $\ell_i$.}
We now describe how to place each $\ell_i$ so that the dark-ray
wedges illustrated in Fig.~\figref{ElbowWedges} indeed contain the claimed rays,
and create a ``safe region" for $v_i$'s vertex guards.

Place $\ell_i$ at the intersection of two lines:
the line $m_i p_i$, and
the line through $v_i$ and the midpoint of $p_j m_{j+1}$.
See Fig.~\figref{Near_v}.

Let $b_i$ be the point where the line through
$p_j$ and $\ell_i$ exits $P$.
Observe that $b_i$ lies in the segment $v_i m_i$.  Our mnemonic is that $b_i$ is just ``before'' $v_i$.
Let $a_i$ be the point where the line through $m_{j+1}$ and $\ell_i$ exits $P$.  Then $a_i$ lies in the segment $v_i p_i$, just ``after'' 
$v_i$.

For a  vertex $v_i$ that has an elbow guard, define 
its \defn{safe region} $R_i$ to be 
the convex quadrilateral $b_i v_i a_i \ell_i$ (pink in Fig.~\figref{Near_v}), which is contained in the triangle $m_i v_i p_i$. 
For a vertex $v_i$ without an elbow guard (the first and last vertices of the zigzag path), its safe region is the triangle $m_i v_i p_i$. 
Observe that the safe regions are pairwise disjoint.

\begin{claim}
If vertex guards for $v_i$ are placed in $R_i$ then the dark rays incident with elbow guards lie in the wedges as specified above
and do not enter the safe regions.
\end{claim}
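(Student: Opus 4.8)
The plan is to verify the claim by checking, for each elbow guard and each vertex guard, that every dark ray it participates in stays within the designated wedge, and that these wedges avoid all safe regions. I will organize the argument around a single representative triangle $T_i$ with apex $v_i$ and base $e = v_j v_{j+1}$, since the full picture is a union of such triangles and the reasoning is uniform. First I would record the combinatorial setup: the dividing points $p_j, m_{j+1}$ split $e$ into three subsegments, $\ell_i$ sits on line $m_i p_i$ and on the line from $v_i$ through the midpoint of $p_j m_{j+1}$, and the points $b_i \in v_i m_i$ and $a_i \in v_i p_i$ are defined as the exit points through $P$ of the lines $p_j \ell_i$ and $m_{j+1} \ell_i$. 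The safe region $R_i$ is the quadrilateral $b_i v_i a_i \ell_i$ (or the triangle $m_i v_i p_i$ at path endpoints), and by construction these are pairwise disjoint since each lies inside the disjoint triangles $m_i v_i p_i$.

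The first substantive step is to pin down the three dark wedges emanating from $\ell_i$. Any guard $g$ that sees $\ell_i$ generates a dark ray at $\ell_i$ that is the continuation of the segment $g\ell_i$ past $\ell_i$. The vertex guards of $v_i$ lie in $R_i$, which is on the $v_i$-side of $\ell_i$; hence their dark rays at $\ell_i$ point away from $v_i$ and cross $e$. I claim they cross $e$ strictly between $p_j$ and $m_{j+1}$: because a vertex guard lies in the quadrilateral $b_i v_i a_i \ell_i$, the segment from it to $\ell_i$, extended, stays between the extensions of $b_i \ell_i$ and $a_i \ell_i$, which by definition of $a_i, b_i$ hit $e$ exactly at $m_{j+1}$ and $p_j$ (this is where the specific choice of $\ell_i$ on the line through $v_i$ and the midpoint of $p_j m_{j+1}$ does its work — it makes the two bounding lines through $\ell_i$ pass through $p_j$ and $m_{j+1}$). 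Conversely, the dark rays at $\ell_i$ generated by guards on the *far* side — namely all vertex/elbow guards associated with vertices ccw from $v_{i+1}$ to $v_{j-1}$, which all sit near the boundary on that arc — point back toward $v_i$ and exit through the segment $v_i m_i$ (the pink wedge before $v_i$); symmetrically the guards on the arc from $v_{j+2}$ to $v_{i-1}$ give dark rays exiting through $v_i p_i$. Here I would use convexity of $P$ together with the fact that all these guards lie in arbitrarily small neighborhoods of the vertices, so the directions $g \to \ell_i$ are all close to the direction $v_\ast \to v_i$, keeping the exit point in the correct subsegment. The crucial point is that the "before $v_i$" pink wedge is bounded away from $v_i$'s own safe region $R_i$, and likewise for "after": since $b_i$ lies strictly between $v_i$ and $m_i$, the pink wedge crossing $v_i m_i$ meets the triangle $m_i v_i p_i$ only in the sub-triangle on the $m_i$ side of line $b_i \ell_i$, which is disjoint from $R_i = b_i v_i a_i \ell_i$.

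The second step handles the vertex guards symmetrically: a vertex guard $g$ of $v_i$ sees $\ell_i$ and all the other elbow guards and vertex guards; its dark rays are continuations of $g$ through each such point. Because $g$ is extremely close to $v_i$ (we get to choose "how close" after the elbow guards and safe regions are fixed), the direction from $g$ to any far guard is within $\varepsilon$ of the direction from $v_i$ to that guard, so the dark ray at the far guard lands in the same combinatorial region as if it originated at $v_i$ exactly — and the dark ray at $\ell_i$ lands in the central segment $p_j m_{j+1}$ by the previous paragraph (with a bit of slack since $\ell_i$ is a fixed positive distance from $v_i$ while the vertex guards shrink toward $v_i$). In particular none of these dark rays re-enters any safe region, because each safe region is a fixed positive-size neighborhood of its vertex while the perturbation is chosen smaller than all relevant clearances. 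I would make this an explicit "choose $\varepsilon$ small enough" argument: there are finitely many transversality conditions (a dark ray must miss finitely many safe regions and stay on the correct side of finitely many wedge-bounding lines), each of which holds strictly in the limiting configuration where vertex guards collapse to their vertices, so all hold for sufficiently small $\varepsilon$.

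The main obstacle, and where I would spend the most care, is the simultaneity: the safe regions and elbow placements must be fixed *first*, and only then are the vertex guards perturbed — so I must verify that the limiting (collapsed) configuration satisfies all the strict inequalities, rather than checking each guard in isolation. Concretely, the delicate case is two vertices $v_i, v_{i'}$ that are apexes of triangles sharing structure along the zigzag, or an endpoint vertex of the path (no elbow guard, larger safe region): I need the dark wedge emanating from $\ell_i$ toward the arc containing $v_{i'}$ to miss $R_{i'}$, which follows because that wedge exits $P$ through $v_i m_i$ or $v_i p_i$ — a subsegment of $\partial P$ not incident to $v_{i'}$ — and $R_{i'}$ lies in the triangle $m_{i'} v_{i'} p_{i'}$ touching $\partial P$ only near $v_{i'}$. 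Finally I would remark that a dark ray can only enter $P$ again after exiting if $P$ were nonconvex, so convexity guarantees each dark ray, once it leaves $P$, is gone for good; hence checking the exit subsegment suffices, and no two of the wedges described above overlap inside $P$, which is exactly the statement that there are no $2$-dark points.
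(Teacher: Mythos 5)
Your handling of the central wedge is essentially the paper's argument and is fine: since $v_i$'s vertex guards lie in the wedge $a_i \ell_i b_i$, whose bounding lines through $\ell_i$ pass through $m_{j+1}$ and $p_j$ by the very definition of $a_i$ and $b_i$, their dark rays at $\ell_i$ lie in the antipodal wedge $m_{j+1}\,\ell_i\,p_j$ and cross the base between $p_j$ and $m_{j+1}$. The gap is in how you treat the two side wedges. There you abandon this exact reasoning and substitute a limiting argument: you assert that the far guards ``lie in arbitrarily small neighborhoods of the vertices,'' collapse the vertex guards onto their vertices, and choose $\varepsilon$ small enough that strict inequalities persist. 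That premise is simply false for the elbow guards $\ell_{i'}$, which sit at fixed positions on the segments $m_{i'}p_{i'}$, a definite distance from their vertices; and it is not available for the vertex guards either, because the Claim is a statement about \emph{every} placement of vertex guards in the fixed, positive-area safe regions $R_{i'}$, not about placements sufficiently close to the vertices. This quantifier matters downstream: the subsequent construction places $x_i$, $y_i$, $z_i$ at points of $R_i$ dictated by other linear constraints (e.g.\ two points of $\partial P$ strictly inside $R_i$, the region for $z_i$ cut out by lines through $y_i$, $b_i$, $c$), with no ``closeness to $v_i$'' parameter available, so a version of the Claim valid only for unspecified small neighborhoods would not support the rest of the proof without reworking it. A secondary slip: your far-guard ranges ($v_{i+1}$ to $v_{j-1}$ and $v_{j+2}$ to $v_{i-1}$) omit the guards associated with the base vertices $v_j$ and $v_{j+1}$, whose dark rays at $\ell_i$ also have to be accounted for.

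The repair is the paper's exact argument, which needs no smallness at all: every vertex or elbow guard associated with a vertex in the ccw range $v_{i+1}$ to $v_j$ lies, for \emph{any} placement in its safe region, in the wedge $p_i\,\ell_i\,p_j$ apexed at $\ell_i$ (each safe region is contained in its triangle $m_{i'}v_{i'}p_{i'}$, and $\ell_i$ lies on $m_i p_i$), so its dark ray at $\ell_i$ lies in the complementary wedge $m_i\,\ell_i\,b_i$; symmetrically, guards for the range $v_{j+1}$ to $v_{i-1}$ lie in the wedge $m_{j+1}\,\ell_i\,m_i$ and give dark rays in $a_i\,\ell_i\,p_i$. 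The portions of these complementary wedges inside $P$ are contained in the triangle $m_i v_i p_i$ and are disjoint from $R_i$ and from all other safe regions, which yields the ``do not enter the safe regions'' conclusion directly. Your closing remark that convexity makes it enough to check exit segments is correct but does not repair the quantifier mismatch above.
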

\begin{proof}
Consider the dark rays incident to $\ell_i$.
Since $v_i$'s vertex guards lie in the wedge 
$a_i \ell_i b_i$, 
they generate dark rays at $\ell_i$ that lie in the complementary wedge $m_{j+1} \ell_i p_j$.
Vertex guards and elbow guards associated with vertices ccw from $v_{i+1}$ to $v_j$ lie in the wedge $p_i \ell_i p_j$ so they generate dark rays at $\ell_i$ that lie in the complementary wedge $m_i \ell_i b_i$ (yellow wedges in Fig.~\figref{Near_v}).
Similarly
vertex and elbow guards associated with vertices ccw from $v_{j+1}$ to $v_{i-1}$ lie in the wedge $m_{j+1} \ell_i m_i$ so they generate dark rays at $\ell_i$ that lie in the complementary wedge $a_i \ell_i p_i$
(green wedges in Fig.~\figref{Near_v}).
\end{proof}


\hide{
\paragraph{\boldmath Locating $\ell_i$.}
We now describe constraints on locating each $\ell_i$ so that the dark-ray
wedges illustrated in Fig.~\figref{ElbowWedges} indeed contain the claimed rays,
and create a ``safe region" for $v_i$'s vertex guards.

Place $\ell_i$ at the intersection of two lines:
the line through $v_i$ and the midpoint of $p_j m_{j+1}$,
and 
the line $m_i p_i$.
The reason for placing $\ell_i$ on 
\anna{the line $m_i p_i$}
is to ensure that dark rays from vertex/elbow guards near $v_{i-1}$ through $\ell_i$ 
hit edge $v_i v_{i+1}$ in the segment $v_i p_i$,
and symmetrically dark rays originating near $v_{i+1}$
hit edge $v_i v_{i-1}$ in the segment $v_i m_i$.

Now lines through $\ell_i$ and $p_j$, and through $\ell_i$ and $m_{j+1}$, cross
$P$ just before and just after $v_i$.
See Fig.~\figref{Near_v}.

\paragraph{Claim 1.}
If we remove $v_i$ and consider the convex hull of the remaining vertices plus the elbow guards, then $\ell_i$ is on the convex hull, and the cyclic order of the points around $\ell_i$  
is $v_{i+1}, \ldots, v_j, v_{j+1}, \ldots v_{i-1}$,
where the associated elbow guards lie inside these two ranges.

\paragraph{Safe region for vertex guards.}
One more constraint will establish the safe region for the vertex guards.
Just as $\ell_i$ is located to the $v_i$-side of the line $m_i p_i$,
\Anna{$\ell_i$ is ON the line $m_i p_i$, so I dont' think we need an extra constraint.}
$\ell_{i-1}$ is located to the $v_{i-1}$-side of the line $m_{i-1} p_{i-1}$.
Therefore
the line through $m_{i-1} \ell_{i-1}$ crosses the edge $v_i v_{i-1}$ before $v_i$,
and symmetrically the line through $p_{i+1} \ell_{i+1}$ crosses the edge $v_i v_{i+1}$ after $v_i$.
See Fig.~\figref{Near_v}.

These lines, together with the constraint from $\ell_i$,
define a convex \defn{safe region} $R_i$,
the pink quadrilateral shown in Fig.~\figref{Near_v},
containing a neighborhood of $v_i$.
These constraints 
ensure that dark rays from the vertex guards in $R_i$ through $\ell_{i-1}$ 
hit the segment $v_{i-1} m_{i-1}$.
Symmetrically, the vertex guards must lie on the $v$-side of the line through 
$\ell_{i+1} p_{i+1}$.

Note that, because $\ell_i$ is to the $v_i$-side of the line $m_i p_i$ through
$v_i$'s adjacent dividing points, the safe region $R_i$
is to that side. This holds for every $i$, and so guarantees that all the safe regions are disjoint.

\paragraph{Claim 2.}
Suppose we place one or more vertex guards in each safe region.  Consider any point $u$ in the safe region of $v_i$ (possibly $u = \ell_i$).  The cyclic order of guards around $u$ (excluding $v_i$’s vertex guards) starts with a vertex guard of $v_{i+1}$ and ends with a vertex guard of $v_{i-1}$.
To prove that a vertex guard, say $w$ at $v_{i-1}$ comes ccw after $\ell_{i-1}$, 
note that the line $w \ell_{i-1}$ hits the boundary of $P$ ccw after 
the base of $v_{i-1}$'s triangle, i.e., after $v_{j+1}$ in Fig.~\figref{Near_v}.
Therefore, the triangle $\triangle u \ell_{i-1} w$ is ccw, establishing the cyclic order of the guards.

Claims~1 and~2 together show that
the dark rays incident with elbow guards lie in the three wedges as specified above and therefore do not intersect inside $P$,
i.e., the dark-ray wedges in Fig.~\figref{ElbowWedges} are accurate as illustrated.
} 


\begin{figure}[htb]
\centering
\includegraphics[width=0.6\textwidth]{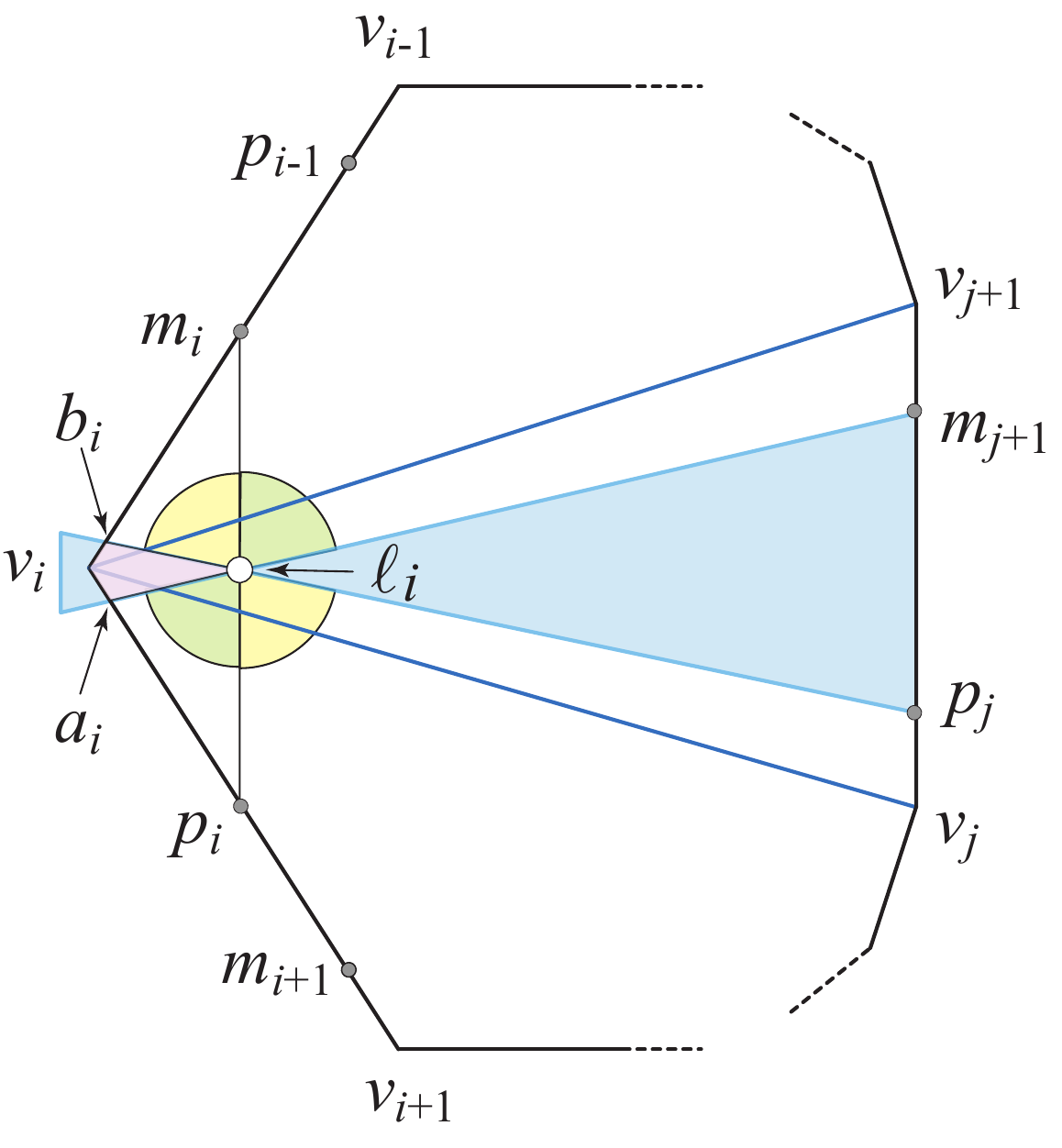}
\caption{
Constraints on locating $\ell_i$, and for locating vertex guards
in a safe region $R_i=b_i v_i a_i \ell_i$.
}
\figlab{Near_v}
\end{figure}

\paragraph{\boldmath Locating $3$ vertex guards.}
Call the three $v_i$ vertex guards $x_i,y_i,z_i$.
We will place them in that order, inside the safe region $R_i$.
$x_i$ will be placed on an edge of $P$,
and $x_i$ and $y_i$ will be on the convex hull $C$ of the guards,
with $z_i$ strictly inside $C$.
\begin{figure}[htb]
\centering
\includegraphics[width=0.8\textwidth]{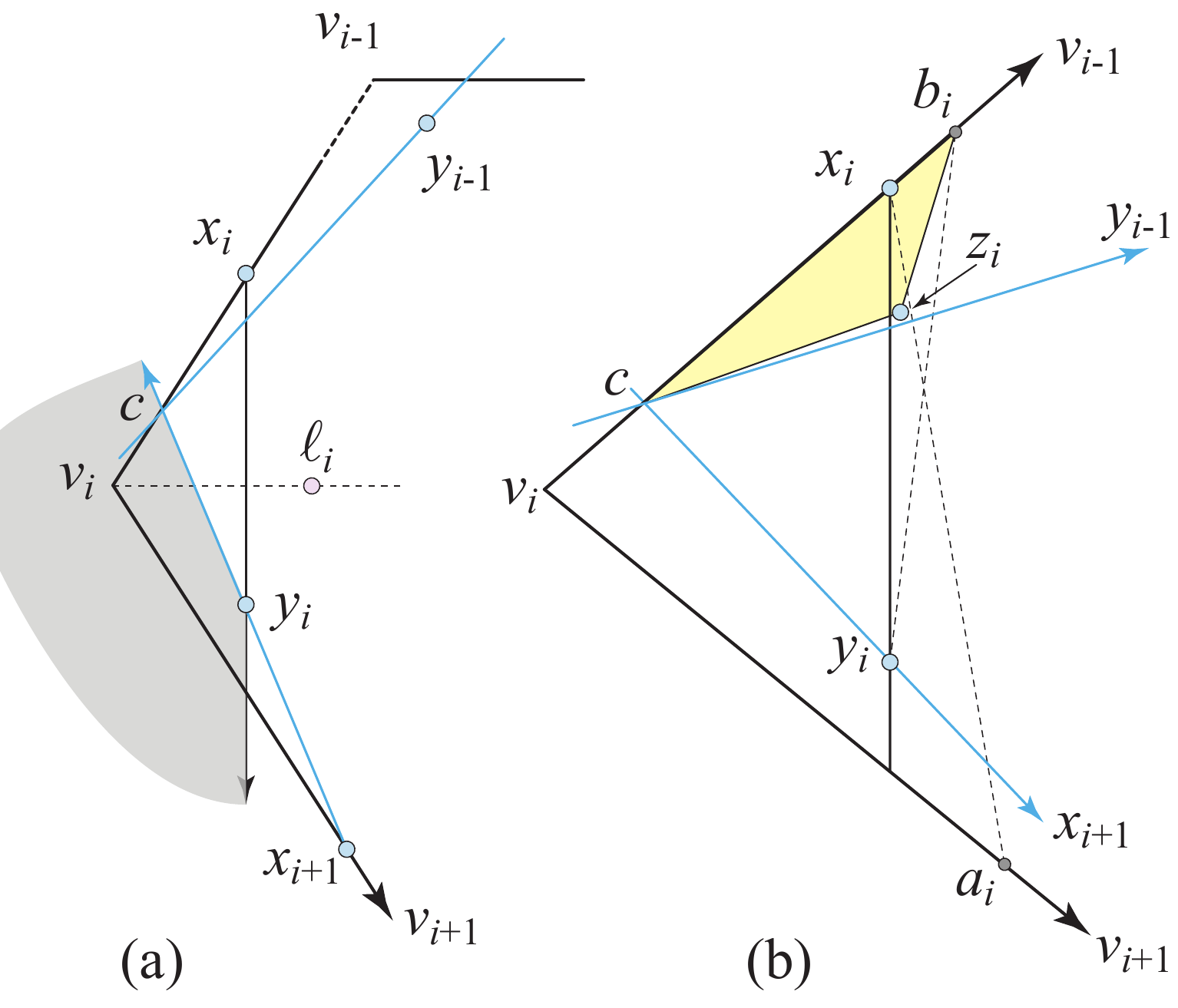}
\caption{(a)~Locating $x_i$ and $y_i$.
Wedge of dark rays apexed at $y_i$ shaded.
(b)~Locating $z_i$ so that dark rays
incident to $z_i$ exit $P$ safely.
}
\figlab{VertexGuards}
\end{figure}

The following construction references $a_i$ and $b_i$ so it applies to the case when $\ell_i$ exists.  But for a vertex $v_i$ without an elbow guard, the same construction works with $m_i$
and $p_i$ serving in place of $b_i$ and $a_i$.

Construct a triangle with apex $v_i$ and two points on $\partial P$
strictly inside the safe region $R_i$.
Place $x_i$ at the corner of this triangle on edge $v_i v_{i-1}$,
and place $y_i$ on the base of the triangle
and on the $p_i$ side of the line $v_i \ell_i$;
see Fig.~\figref{VertexGuards}.
Observe that all the elbow guards are inside the resulting hull $C$.
Because $x_i$ is the only guard on its edge, there are no dark rays incident to $x_i$
inside $P$.
Because $y_i$ lies on $C$ with neighbours $x_i$ and $x_{i+1}$, all
the dark rays incident to $y_i$ lie in 
the complementary wedge bounded by the lines $y_i x_i$ and $y_i x_{i+1}$, and including $v_i$
(gray in Fig.~\figref{VertexGuards}(a)).
Note that no other dark rays intersect this wedge because it lies inside the safe region.

We now place $z_i$.
Let $c$ be the point where 
the line $x_{i+1} y_i$
intersects the edge $v_i v_{i-1}$.
See Fig.~\figref{VertexGuards}(a).

We will ensure that the dark rays incident to $z_i$---except for the one generated by $x_i$---lie in the wedge $c z_i b_i$
(yellow in Fig.~\figref{VertexGuards}(b)).
This implies that 
these rays do not  intersect any other dark rays.

We place $z_i$:
\begin{enumerate}
\squeezelist
    \item inside $C$,
    \item on the $x_i$ side of lines $y_i b_i$ and $y_{i-1} c$,
    \item on the $y_i$ side of line $x_i a_i$.
\end{enumerate}

Observe that 
these constraints
determine a non-empty region for $z_i$.

Conditions 1 and 3 ensure that the dark ray incident to $z_i$ generated by $x_i$ hits the edge $v_i v_{i+1}$ in the segment between $y_i$'s dark wedge and $a_i$, so it intersects no other dark ray.

Conditions 1 and 2 ensure that, if we ignore $x_i$, then $z_i$ lies on the convex hull 
\anna{$C_i$}
of the remaining guards, with neighbours $y_i$ and $y_{i-1}$.  Therefore the dark rays incident to $z_i$ lie in the complementary 
wedge---apexed at $z_i$ and exterior to 
$C_i$---which 
lies inside the wedge $b_i z_i c$, as required.

We note that, although our construction places guards quite close together, the  coordinates have polynomially-bounded bit complexity, since we used a finite sequence of linear constraints. By contrast, 
irrational coordinates may be required for the conventional art gallery problem in a simple polygon~\cite{Abrahamsen2021}.

Note that at no point
do we rely on the metrical properties of $P$,
so the construction works for all convex polygons.

This completes the proof of Theorem~\ref{thm:no-2-dark}(B).



\section{Simple Polygon}
\seclab{SimplePolygon}
We mentioned in the Introduction that the
variant we are exploring---multiple coverage and guards-blocking-guards---is
not a natural fit for arbitrary simple polygons.
In a convex polygon $P$, each pair of guards sees all of $P$ except
for their dark rays, whereas in an arbitrary polygon,
guard visibility is also blocked by reflex vertices of $\partial P$.

\subsection{Necessity}
The comb example that establishes necessity of
$\lfloor n/3 \rfloor$ guards to cover a simple polygon of $n$
vertices, 
also shows the necessity of
$k \lfloor n/3 \rfloor$ guards to cover to depth $k$---since no guard can see into more than one spike of the comb, each of the $\lfloor n/3 \rfloor$ spikes needs at least $k$ distinct guards.

In fact, if the comb has at least two spikes, then $k \lfloor n/3 \rfloor$ guards also suffice.
The general construction for $k \ge 2$ is
illustrated in Fig.~\figref{Comb} for depth $k=4$ and $n=9$.
Place $k$ guards in a convex arc below each spike of the comb
so that none of the dark rays generated by these guards enters
any spike. Points in a spike are covered to depth $k$
by the $k$ guards below it.
Although many dark rays cross in the base corridor of the comb,
slight vertical staggering of the convex arcs of $k$ guards
ensures that 
no corridor point is at the intersection of three dark rays, which ensures coverage to depth $k$ for $k \ge 2$ and at least two spikes.
\begin{figure}[htbp]
\centering
\includegraphics[width=0.8\textwidth]{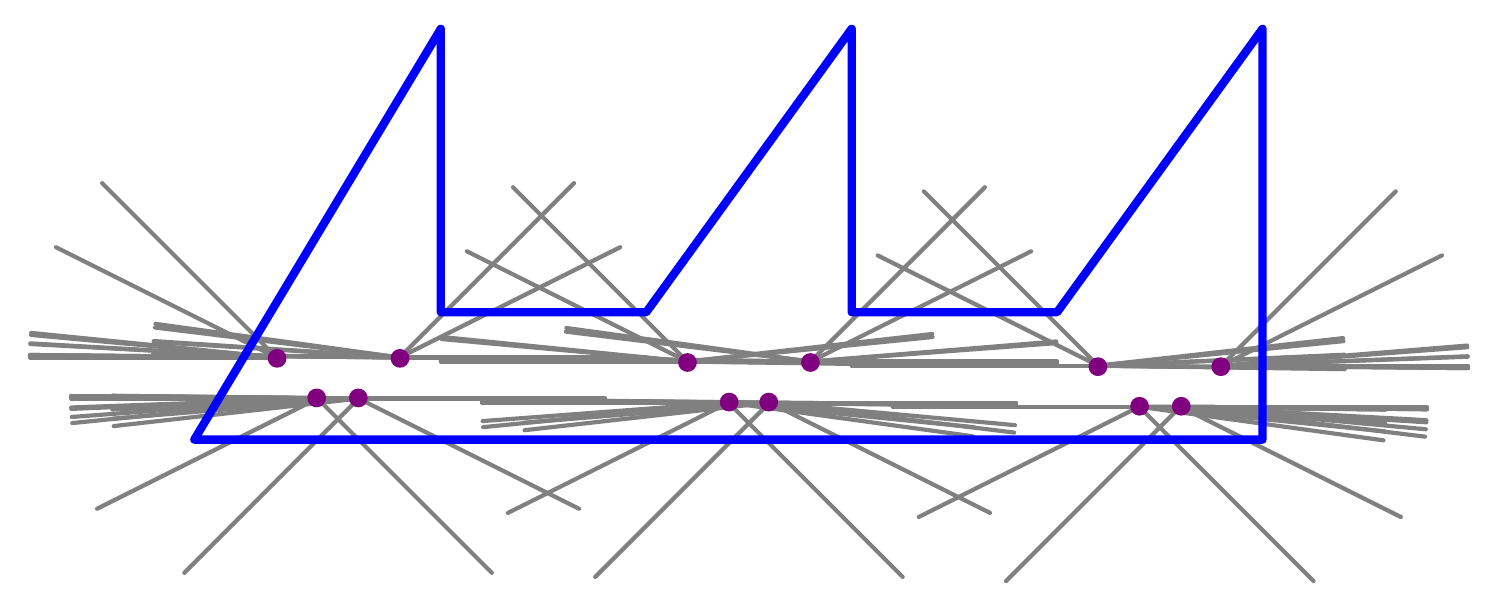}
\caption{$4 \cdot 3 = 12$ guards suffice to $4$-cover the comb of $9$ vertices.}
\figlab{Comb}
\end{figure}

\subsection{Sufficiency}
For sufficiency (beyond the comb example), we have not obtained a tight bound:
To cover a simple polygon $P$ of $n$ vertices to depth $k$,
we show that $g= (k+2) \lfloor n/3 \rfloor$ guards suffice.
Following Fisk~\cite{f-spcwt-78} we
first triangulate $P$, $3$-color, and choose the 
smallest color class,
which has cardinality
at most $\lfloor n/3 \rfloor$.
In Fig.~\figref{SimplePolygon}, say we select color $1$.
If a color-$1$ vertex $v$ is convex, then define a cone
$C$ apexed at $v$ bounded by the edges incident to $v$.
If a color-$1$ vertex $v$ is reflex, then define $C$ to be
the ``anticone" at $v$: the cone apexed at $v$ and bound by the
extensions of the incident edges into the interior.

\begin{figure}[htbp]
\centering
\includegraphics[width=0.5\textwidth]{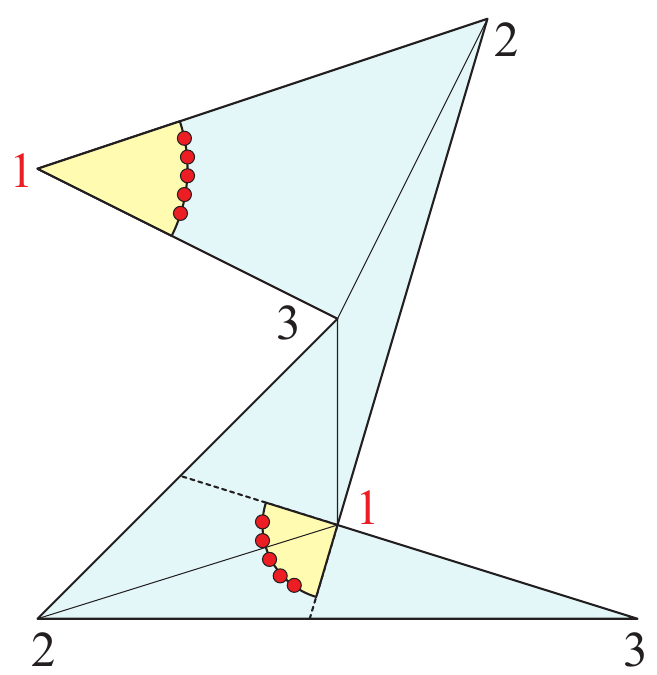}
\caption{Cones at the color-$1$ reflex vertices each contain $k+2$ guards. Here the $5$ guards achieve a $3$-cover.}
\figlab{SimplePolygon}
\end{figure}

To cover $P$ to depth $k$, place $k+2$ guards along a convex arc
near a color-$1$ vertex $v$, and inside $v$'s cone.
In the figure, we aim to $3$-cover and so place $5$ guards in each cone.
Now it is clear that the $k+2$ guards at 
color-$1$ vertex $v$
see into all the triangles incident to $v$.
These guards generate crossing dark rays, but
by Lemma~\ref{lemma:avoid-3-dark} we can perturb the locations of the guards to  
avoid three dark rays meeting in $P$.
The result is coverage to depth two less than the number of
guards at each color-$1$ vertex:

\begin{theorem}
\thmlab{SimplePolygon}
To cover a simple polygon of $n$ vertices to depth $k$,
$g= k \lfloor n/3 \rfloor$ guards are sometimes necessary, and
$g= (k+2) \lfloor n/3 \rfloor$ guards always suffice.
\end{theorem}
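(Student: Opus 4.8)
The plan is to prove the two halves of Theorem~\ref{thm:SimplePolygon} separately, since they are essentially independent.

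\textbf{Necessity ($k\lfloor n/3\rfloor$).}
First I would recall the standard comb construction used to prove the lower bound in the classical Art Gallery Theorem: a simple polygon of $n$ vertices (with $n$ a multiple of $3$) consisting of $\lfloor n/3\rfloor$ thin ``spikes'' attached to a base corridor, arranged so that no single point of the polygon can see into two different spikes. The key geometric fact is that each spike has a thin enough tip that the set of points seeing all of it is confined to a region disjoint from the corresponding region of every other spike. Granting that, I would argue: in any placement of guards that $k$-covers $P$, consider the apex point of spike $i$; it must be seen by $k$ guards, and each such guard must lie in spike $i$'s ``visibility pocket.'' Since these pockets are pairwise disjoint across spikes, the $k$ guards for spike $i$ are distinct from those for spike $j$, so at least $k\lfloor n/3\rfloor$ guards are needed. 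Note that the guards-blocking-guards rule only makes this \emph{harder}, so the bound still holds; and co-location is irrelevant here because the pockets are disjoint anyway. I would remark that the excerpt already sketches the improved two-spike comb showing $k\lfloor n/3\rfloor$ also \emph{suffices} for combs, so the lower bound is tight on that family, but this remark is not needed for the theorem statement itself.

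\textbf{Sufficiency ($(k+2)\lfloor n/3\rfloor$).}
Here I would follow the Fisk argument as set up in the excerpt. Triangulate $P$ (adding no Steiner points), properly $3$-color the vertices of the triangulation, and pick the smallest color class $S$, so $|S|\le\lfloor n/3\rfloor$. For each $v\in S$ define the cone $C_v$: if $v$ is convex, $C_v$ is the wedge at $v$ between its two polygon edges; if $v$ is reflex, $C_v$ is the ``anticone,'' the wedge at $v$ bounded by the two inward extensions of the incident edges. The crucial classical observation is that every triangle of the triangulation has exactly one vertex of each color, hence one vertex in $S$, and that vertex's cone (or anticone) contains that whole triangle — so the union of the cones over $v\in S$ covers $P$. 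Now, near each $v\in S$ and inside $C_v$, place $k+2$ guards along a small convex circular arc centered roughly at $v$, chosen small enough that the arc and all sightlines from it to points of the triangles at $v$ stay within $C_v$ (and within $P$). Because the arc is convex and tucked into the cone, each of these $k+2$ guards sees every triangle incident to $v$ entirely, so each point of $P$ that lies in a triangle at $v$ is seen by all $k+2$ of $v$'s guards. The only worry is the blocking rule: within a cluster of $k+2$ guards, and between clusters, dark rays may pass through $P$; a point hit by $3$ dark rays would lose coverage. But Lemma~\ref{lemma:avoid-3-dark} (guards in general position) lets us perturb all $g=(k+2)\lfloor n/3\rfloor$ guards slightly so that no point of $P$ lies on three dark rays at once, while keeping each guard in its cone and each cluster convex. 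After this perturbation, every point of $P$ lies in some triangle at some $v\in S$, is seen by $k+2$ guards of $v$ except possibly for $\le 2$ of them blocked by dark rays, hence is seen by at least $k$ guards. This gives the $k$-cover.

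\textbf{Main obstacle.}
The delicate point in sufficiency is ensuring the small convex arc of $k+2$ guards at each $v$ really sees \emph{all} of the triangles incident to $v$ — both that the arc can be made small enough to stay inside the cone/anticone (in the reflex case this needs a genuine check, since the anticone is the inward region and the incident triangles live there) and that moving from the single point $v$ to a spread-out cluster doesn't cost visibility of any triangle corner. I would handle this by taking the arc radius smaller than the distance from $v$ to the nearest non-incident feature of $P$, so that visibility from the cluster to the incident triangles is combinatorially the same as from $v$ itself. On the necessity side, the only thing to be careful about is stating precisely why the spike ``visibility pockets'' are disjoint; this is exactly the property the classical comb is built to have, so I would cite it rather than re-derive it.
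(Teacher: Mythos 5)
Your proposal is correct and follows essentially the same route as the paper: the comb construction with disjoint spike-visibility regions for the lower bound, and Fisk's triangulation $3$-coloring with $k+2$ guards clustered on a convex arc inside the cone (or anticone) at each vertex of the smallest color class, using Lemma~\ref{lemma:avoid-3-dark} to perturb away any $3$-dark points so at most two of a cluster's guards can be blocked from any point. Your added care about choosing the arc radius small enough to preserve visibility of all incident triangles is a reasonable elaboration of what the paper leaves implicit, not a different approach.
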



\section{\boldmath $10$ Guards in a Wedge}
\seclab{Wedge}
Define a \defn{wedge} as the region of the plane
bounded by two rays from a convex vertex $a$, i.e., a
cone with apex $a$.
The connections between $k$-guarding and dark points
(Observations~\ref{obs:guards-and-rays-1},~\ref{obs:guards-and-rays-2}, and~\ref{obs:guards-and-rays-3})
still hold, and the main issue is the analogue of Theorem~\ref{thm:no-2-dark}---what is the maximum number of guards that can be placed in a wedge without creating $2$-dark points? For a triangle, the bound is $4n-2 = 10$.  In this section we prove that the same bound holds for a wedge.

The analogue of Theorem~\ref{thm:no-2-dark}(A)
is easy: If we could place $11$ guards in a wedge
without $2$-dark points, then we could simply cut off the empty part of the wedge to create a triangle with $11$ guards and no $2$-dark points, a contradiction to 
Theorem~\ref{thm:no-2-dark}(A).

However, 
the analogue of Theorem~\ref{thm:no-2-dark}(B), 
i.e., a placement of $10$ guards without $2$-dark points, does not carry over from our triangle construction, 
because there were dark ray intersections beyond every edge of the triangle.
Nevertheless,
we now show this bound is tight, with the example illustrated
in Figs.~\figref{Wedge10tall} and~\figref{Wedge10}.
We number the guards from bottom to top.
Here is a description of the construction:
\begin{itemize}
    \item $g_1$ is directly below the apex $a$, and far below.
    \item $g_2$ is slightly left of $g_1$, so that the upward dark
    ray at $g_2$ exits the wedge at a particular ``safe" spot
    between $g_7$ and $g_{10}$.
    \item Guard pairs $g_3,g_4$, $g_5,g_6$, $g_7,g_8$
    are symmetrically placed with respect to a vertical line $L$ through $a$.
    \item Guards $g_7,g_8$ are located on the two edges of the wedge.
    \item $g_{10}$ is on $L$ near $a$, while $g_9$ is right of $L$.
    \item There are six guards on the convex hull $C$ of the guards:
    $\{ g_1,g_3,g_7,g_{10},g_8,g_4 \}$.
    \item $g_5,g_6$ are just slightly inside $C$.
\end{itemize}
We provide coordinates for the guards in 
Section~\secref{GuardCoords}, 
and have verified that there are no $2$-dark points in the wedge.


Note that this construction provides an alternative arrangement of guards for a triangle:
Introduce a triangle edge $bc$ below $g_1$, and apply an affine transformation to
$\triangle abc$ to match Fig.~\figref{Wedge10tall}.

There is one difference between the bounds for $k$-guarding a wedge and a triangle: we can $3$-cover a triangle with three guards, one on each edge, but to $3$-cover a wedge, we need an extra guard.
We summarize the implications for $k$-guarding a wedge in this lemma.

\begin{figure}[htbp]
\centering
\includegraphics[height=0.8\textheight]{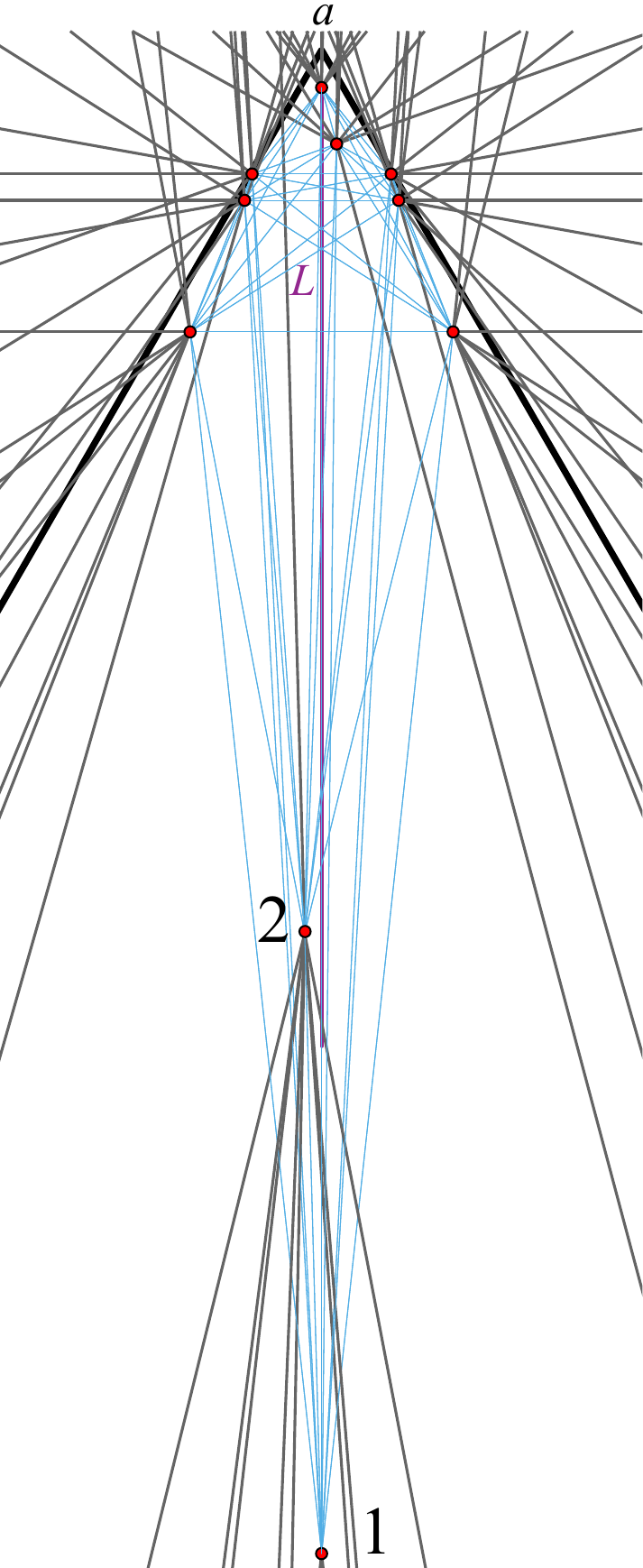}
\caption{Wedge apex $a$, $10$ guards with no $2$-dark points.}
\figlab{Wedge10tall}
\end{figure}
\begin{figure}[htbp]
\centering
\includegraphics[width=0.8\textwidth]{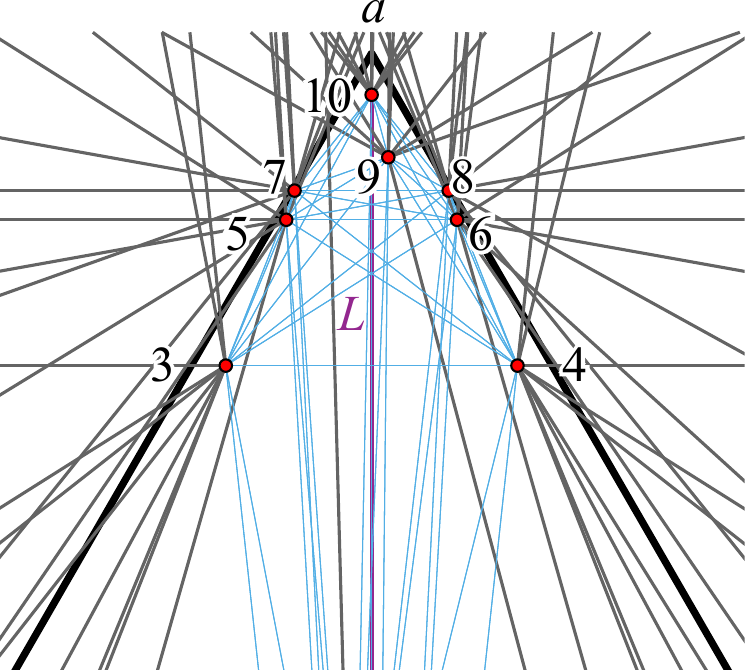}
\caption{Closeup of upper portion of Fig.\protect\figref{Wedge10tall}.}
\figlab{Wedge10}
\end{figure}

\begin{lemma}
\lemlab{Wedge}
Covering a wedge to depth $k$ requires the same number of guards
as it does to cover a triangle to depth $k$,
except that to $3$-cover requires $4$ guards.
In particular, $g=10$ guards can cover to depth $9$.
\end{lemma}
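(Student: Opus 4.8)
The plan is to assemble Lemma~\lemref{Wedge} from the pieces already in place, exactly mirroring the three-regime structure of Theorem~\ref{thm:main} but with the wedge playing the role of the $n$-gon with effective parameter $n=3$ (so $4n-2=10$), and then carefully isolate the single exceptional value $k=3$. First I would restate the dictionary: by Observations~\ref{obs:guards-and-rays-1},~\ref{obs:guards-and-rays-2}, and~\ref{obs:guards-and-rays-3}, which the excerpt notes still hold for a wedge, $k$-guarding a wedge with $k$, $k+1$, or $k+2$ guards is possible iff there is no $1$-, $2$-, or $3$-dark point in the wedge respectively; and Lemma~\ref{lemma:avoid-3-dark} (stated for wedges too) guarantees $k+2$ guards always suffice.

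Next I would pin down the three thresholds for the wedge. For the $2$-dark threshold, the text already supplies both halves: the analogue of Theorem~\ref{thm:no-2-dark}(A) follows by cutting off the empty part of the wedge to get a triangle with $11$ guards and no $2$-dark point, contradicting Theorem~\ref{thm:no-2-dark}(A); and the analogue of (B) is the explicit $10$-guard construction of Figs.~\figref{Wedge10tall} and~\figref{Wedge10}, verified in Appendix~\secref{GuardCoords}. So the maximum number of guards with no $2$-dark point in a wedge is exactly $10$, just as for a triangle, which means regime~(3) of the wedge starts at $k = 10 - 1 = 4n-2 = 10$, identical to the triangle. For the $1$-dark threshold, however, the wedge differs: any guard strictly interior to the wedge sees a dark ray generated by every other guard, so all guards must lie on $\partial(\text{wedge})$, which consists of only \emph{two} edges (two rays from $a$); each edge can hold at most one guard interior to it before a dark ray appears along that edge, and the apex $a$ can hold one guard, giving at most $3$ guards with no dark point — \emph{but} two collinear-with-$a$ guards on opposite rays is impossible since they are not collinear, and the apex guard together with an edge guard is fine. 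Re-examining: we can place a guard at $a$ and one in the interior of each of the two edges, $3$ guards, no dark point; a fourth guard forces a dark point. So regime~(1) for the wedge is $k \le 3$, the same numerical cutoff as $k \le n = 3$ for the triangle — \emph{except} that for the triangle $k=3$ lands in regime~(1) with $g=3$ via three edge guards whose dark rays all run along edges and hence exterior, whereas for the wedge the analogous $3$-guard placement (apex plus one per edge) has the apex guard $g_a$ and an edge guard $g_e$ generating a dark ray at $g_e$ that runs \emph{into} the wedge beyond $g_e$ only if $g_a,g_e$ and an interior point are collinear in that order — which they are, so that dark ray is interior, creating a $1$-dark point; any $3$-guard configuration in the closed wedge can be checked to leave a dark ray inside, so $k=3$ needs $4$ guards.

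Thus the bookkeeping I would present is: for $k \le 2$, $g=k$ (trivially necessary; sufficient by placing $k$ guards among $\{a\}$ and the two edge interiors, all dark rays running along edges/exterior); for $k=3$, the special case, $g=4$ (necessity: every placement of $3$ guards in the closed wedge creates a $1$-dark point, so by Observation~\ref{obs:guards-and-rays-1} three guards fail, and $4 \le 4n-2=10$ so Theorem~\ref{thm:no-2-dark}(B)-for-wedges gives a $2$-dark-free placement of $4$ guards, hence a $3$-cover by Observation~\ref{obs:guards-and-rays-2}); for $3 < k < 10$, $g=k+1$ (necessity from the $1$-dark threshold $=3$; sufficiency from the wedge analogue of Theorem~\ref{thm:no-2-dark}(B) since $k+1 \le 10$); and for $k \ge 10$, $g=k+2$ (necessity from the wedge analogue of Theorem~\ref{thm:no-2-dark}(A) applied to $k+1 > 10$ guards; sufficiency from Lemma~\ref{lemma:avoid-3-dark}). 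Comparing against the triangle's table ($n=3$: regimes $k\le 3$, $3<k<10$, $k\ge 10$ giving $g=k, k+1, k+2$), the only discrepancy is at $k=3$, which needs $4$ guards for a wedge versus $3$ for a triangle — matching the lemma statement. Finally, $k=9$ falls in $3<k<10$, so $g=10$, and the construction of Fig.~\figref{Wedge10tall} realizes it, giving the "in particular" clause.

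The main obstacle I anticipate is the $k=3$ necessity claim — proving that \emph{no} placement of exactly $3$ guards in a closed wedge avoids a $1$-dark point. The clean way is: if any guard is strictly interior, every other guard generates a dark ray at it pointing into the wedge, so we'd need all $3$ on the two bounding rays; two guards on the same ray put one in that edge's interior with a dark ray along the edge (which is inside the closed wedge), so we'd need exactly one guard per ray plus one at the apex $a$; but then $a$ and either edge-ray guard, say on ray $\rho$, are positioned so that the ray from the edge-guard away from $a$ runs along $\rho$ — which is the boundary, hence \emph{not} strictly interior — so I must instead note that the \emph{apex} guard at $a$ together with the guard on $\rho$ generates a dark ray \emph{at the apex guard} pointing along $\rho$ outward, also boundary; the genuinely interior dark ray comes from the two edge guards (one per ray): the guard $g$ on one ray and the guard $g'$ on the other ray are mutually visible, and the ray at $g$ continuing past $g$ away from $g'$ enters the wedge interior (since $g$ is not at the apex). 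This forces a $1$-dark point. I would write this case analysis tightly, perhaps as a short displayed argument, and that is the only genuinely new content; everything else is citation and the regime-matching table.
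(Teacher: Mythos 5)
Your overall plan---treating the wedge as the $n=3$ case, invoking the Observations, cutting the wedge down to a triangle for the upper $2$-dark threshold, citing the explicit $10$-guard construction for the lower one, and using Lemma~\ref{lemma:avoid-3-dark} for $k\ge 10$---is exactly the paper's route, and the only genuinely new ingredient either proof needs is the claim that \emph{every} placement of three or more guards in a wedge creates a dark point. That is precisely where your ``clean'' final case analysis goes wrong. You dismiss the dark ray running along a bounding ray as ``boundary, hence not strictly interior,'' and then try to recover the contradiction from the two edge guards, asserting that the dark ray at the guard $g$ on one ray, generated by the guard $g'$ on the other ray, ``enters the wedge interior.'' That is geometrically false: the wedge is convex, so it lies entirely on one side of the line containing $g$'s bounding ray, and the continuation of the chord $g'g$ beyond $g$ crosses that line at $g$ and is strictly exterior. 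Moreover, if only strictly interior dark points counted, your reading of Observation~\ref{obs:guards-and-rays-1} would certify apex-plus-one-guard-per-ray as a $3$-cover by three guards, yet a boundary point beyond an edge guard is seen by only two guards (the apex guard is blocked)---contradicting the very lemma you are proving.

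The repair is the convention the paper uses throughout (and which you used, correctly if loosely, in your second paragraph): the wedge, like $P$, is closed and must be covered including its boundary, so Observation~\ref{obs:guards-and-rays-1} requires all dark rays to be \emph{strictly} exterior, and a dark ray lying along a bounding ray already produces a dark point---exactly as in Lemma~\ref{lemma:create-1-dark}, where two guards on one edge create a dark ray along that edge. The correct case analysis is then short: a strictly interior guard receives dark rays pointing into the wedge from every other guard; otherwise all guards lie on the two closed bounding rays, and with three or more guards two of them lie on the same closed ray (the apex belongs to both), so the guard farther from the apex has a dark ray along that ray, inside the closed wedge. The same confusion infects two smaller spots: your $k\le 2$ sufficiency (``$k$ guards among $\{a\}$ and the two edge interiors, all dark rays running along edges/exterior''---apex plus one edge guard fails for $k=2$ precisely because of the along-the-edge dark ray; you must take the two edge-interior guards, whose mutual dark rays are strictly exterior), and the later slip ``necessity from the $1$-dark threshold $=3$'' (the threshold is $2$). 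With these corrections your argument coincides with the paper's proof.
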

\begin{proof}
If $k \le 2$, a guard at the one vertex,
or one guard on the interior of each edge,
suffices.
However, any placement of 3 guards creates a dark point in the wedge, so for $k \ge 3$, at least $k+1$ guards are needed to $k$-guard.
For $k \le 9$, the configuration just described 
shows that $k+1$ guards suffice---this
covers the middle regime.
For $k \ge 10$, $g=k+2$ guards are needed and suffice,
from 
Observation~\ref{obs:guards-and-rays-3} and Lemma~\ref{lemma:avoid-3-dark}.
\end{proof}

The surprising part of this result is that 10 guards can be placed in a wedge without creating $2$-dark points---despite the fact that our triangle construction (see Fig.~\figref{TriangleDarkWedges}) fails for a wedge because it has $2$-dark points just outside each triangle edge.

\section{Open Problems}
\seclab{Open}
\begin{enumerate}
    \item Investigate bounds or the 
    complexity (NP-hard?) of 
    placing points
    in a simple polygon so that no two dark rays intersect.  
    (As noted in Section~\secref{SimplePolygon}, the connection between this problem and $k$-guarding fails for non-convex polygons.)
    \item  Close the 
    gap in the bounds for a simple polygon
    in Theorem~\thmref{SimplePolygon}.
    \item 
    Can the tight bound for a wedge in Lemma~\lemref{Wedge}
    be generalized to 
    tight bounds for
    unbounded convex polygons with two rays joined by a chain of $n-1$ vertices  and $n-2$ edges?
    %
\end{enumerate}

\subparagraph{Acknowledgements.}
We benefited from suggestions of three referees
and questions at the conference
presentation~\cite{MIT-DarkRays-2023}.

\appendix
\section{Appendix: Guard Coordinates}
\seclab{GuardCoords}
We include here explicit coordinates for guards in a triangle, a square, and a wedge. In all cases,
Mathematica code has verified that dark-ray intersections are strictly exterior.

Coordinates for $10$ guards in an equilateral triangle,
Fig.~\figref{FreddyTriangle}.
Triangle corners are $(0,200)$, $(\pm 100 \sqrt{3},-100)$.
Guard locations for the other $g_i$ are symmetrical placements following 
Fig.~\figref{TriangleDarkWedges}.
In response to a referee question, we verified that all 
triangle vertices and guard coordinates can be (imperceptibly) adjusted
to be rational.

\begin{center}
\begin{tabular}{|c|rr|}
\hline
    $g_i$ &  $x,$&$y$ \\
\hline\hline
 $5$ & $-102.57,$&$ -96$ \\
 $6$ & $-102.6,$&$ -100$ \\
 $7$ & $-118,$&$ -49$ \\
 $10$ & $0,$&$ 0$ \\
\hline
\end{tabular}
\end{center}

Coordinates for $14$ guards in a square,
Fig.~\figref{Square_FS}.
Square corner coordinates $(\pm 200, \pm 200)$.
Guard locations $g_6,\ldots,g_{14}$ are symmetrical placements of $g_3,g_4,g_5$.

\begin{center}
\begin{tabular}{|c|rr|}
\hline
    $g_i$ &  $x,$ & $y$ \\
\hline\hline
$1$ & $-65,$&$-120$ \\
$2$ & $ 65,$&$ 120$ \\
$3$ & $-180,$&$-180$ \\
$4$ & $-198,$&$ -137.7$ \\
$5$ & $-200,$&$ -135$ \\ 
\hline
\end{tabular}
\end{center}

Coordinates for $10$ guards in a wedge,
Figs.~\figref{Wedge10tall} and~\figref{Wedge10}.
Apex at $(0,200)$, apex angle $\pi/3$.
Guard locations $g_4,g_6,g_8$ are symmetrical
placements of $g_3,g_5,g_7$.

\begin{center}
\begin{tabular}{|c|rr|}
\hline
     $g_i$ &  $x,$&$y$ \\
     \hline\hline
     $1$ & $0,$&$ -600$ \\
     $2$ & $-9,$&$ -270$ \\
     $3$ & $-70,$&$ 50$ \\
     $4$ & $70,$&$ 50$ \\
     $5$ & $-41,$&$ 120$ \\
     $6$ & $41,$&$ 120$ \\
     $7$ & $-38.1,$&$ 134$ \\ 
     $8$ & $38.1,$&$ 134$ \\
     $9$ & $8,$&$ 150$ \\
     $10$ & $0,$&$ 180$ \\
\hline
\end{tabular}
\end{center}




\bibliographystyle{plain} 
\bibliography{sn-bibliography} 

\end{document}